\documentclass[11pt]{article}
\usepackage[margin=1in]{geometry}
\usepackage{amsmath,amssymb,amsthm,mathtools,bm}
\usepackage{newtxtext,newtxmath}

\usepackage{microtype}
\usepackage{booktabs,tabularx,array,longtable}
\usepackage{xcolor}
\usepackage[round,authoryear]{natbib}
\usepackage[colorlinks=true,allcolors=blue]{hyperref}
\usepackage{enumitem}
\usepackage{caption}

\setlist{leftmargin=1.6em,itemsep=0.2em,topsep=0.3em}

\newtheorem{theorem}{Theorem}[section]
\newtheorem{proposition}[theorem]{Proposition}
\newtheorem{lemma}[theorem]{Lemma}
\newtheorem{corollary}[theorem]{Corollary}
\newtheorem{assumption}[theorem]{Assumption}
\theoremstyle{remark}
\newtheorem{remark}[theorem]{Remark}
\newtheorem{example}[theorem]{Example}

\newcommand{\R}{\mathbb{R}}
\newcommand{\E}{\mathbb{E}}
\newcommand{\Var}{\operatorname{Var}}
\newcommand{\Cov}{\operatorname{Cov}}
\newcommand{\tr}{\operatorname{tr}}
\newcommand{\op}{\mathrm{op}}
\newcommand{\F}{\mathrm{F}}
\newcommand{\1}{\mathbf{1}}
\newcommand{\Ahat}{\widehat A}
\newcommand{\Abar}{A^0}
\newcommand{\Fhat}{\widehat F}
\newcommand{\neff}{n_{\mathrm{eff}}}
\newcommand{\rstab}{r_{\mathrm{stab}}}

\title{Spatial Kernel Restricted Canonical Correlation Analysis:\\Population Targets, Identifiability, and Stability}
\author{\begin{minipage}{0.97\textwidth}
\centering
Zhen Miao\textsuperscript{1}, Yilong Qu\textsuperscript{2}, and Nancy R. Zhang\textsuperscript{3}\\[0.6em]
\small
\textsuperscript{1}Fred Hutch Cancer Center, Seattle, WA, USA\\[0.3em]
\textsuperscript{2}Graduate Group in Genomics and Computational Biology,\\
Perelman School of Medicine, University of Pennsylvania, Philadelphia, PA, USA\\[0.3em]
\textsuperscript{3}Department of Statistics and Data Science, The Wharton School,\\
University of Pennsylvania, Philadelphia, PA, USA
\end{minipage}}
\date{}

\begin{document}
\maketitle

\begin{abstract}
Relating observations of different categories across space requires methods that accommodate distinct sampling locations and spatial dependence. Spatial kernel restricted canonical correlation analysis (skrCCA) uses spatial weights to identify associated feature combinations across data categories. Here, we develop a statistical theory of its covariance form after feature reduction. Conditional on the observed locations and under a mean-zero model, the pairwise association matrices target kernel-weighted averages of cross-covariances. For prespecified directions, we derive the exact null variance under independence between blocks. We establish recovery and stability bounds in two- and multiblock settings under explicit population models, and show that exact cyclic updates have convergent objective values. The variance characterization shows how spatial weighting and within-block dependence jointly determine the effective information available for estimating association. The recovery bounds separate statistical and optimization error, relating stability to population separation and the strength of connections between blocks. These results clarify what fitted spatial association axes estimate and provide conditions under which they can be recovered reliably.
\end{abstract}

\section{Introduction}
Many scientific studies collect multivariate observations from different populations or processes over a common spatial domain, each sampled at its own locations. A central statistical task is to identify associated patterns of variation across these data blocks. Spatial proximity provides a way to relate observations when direct pairing is unavailable. At the same time, nearby observations may carry partly redundant information. Understanding spatial association therefore requires both a population target and an account of its estimation uncertainty.

Spatial transcriptomics provides one motivating example. These technologies measure gene expression in its spatial context within tissue~\citep{MosesPachter2022}, allowing researchers to ask whether expression programs, coordinated patterns of variation across many genes, in different cell types vary together across space. Such coordination may accompany differentiation or tissue injury~\citep{Miao2026}. A cell at one position, however, does not have a unique counterpart of another type.

A similar problem arises in ecology when species composition and environmental conditions are recorded at different sets of sites within the same landscape~\citep{Dray2002}. One may seek combinations of species abundances associated with combinations of environmental variables across nearby sites. The data blocks have different features and sampling locations, but spatial proximity provides a basis for relating them. Both examples motivate a framework for studying association between spatially linked data blocks.

If observations were paired across blocks, canonical correlation analysis (CCA), including its multiblock extensions, could compare their feature combinations directly~\citep{Hotelling1936,Hardoon2004,Kettenring1971}. A coarse-grained remedy for missing pairing is to create common units by dividing the domain into bins and aggregating each block within them. CCA and sparse variants can then be applied to the matched profiles~\citep{WittenTibshirani2009}. In its spatial-niche formulation, DIALOGUE follows this route, averaging each cell type's features within niches before applying sparse CCA~\citep{JerbyArnon2022}. Aggregation discards positions within a bin, and two observations on opposite sides of the (arbitrary) boundary are treated as unrelated. The matched analysis also treats adjacent bins as independent units, discarding any higher-level structure.

Spatially weighted ordination instead lets a spatial weight matrix supply the pairing: the cross-covariance between a feature of one block and a feature of another is computed over all pairs of sites, each pair weighted by its proximity, so every observation stays at its own position. \citet{Wartenberg1985} introduced this device for one multivariate sample, replacing the covariance matrix by a spatially lagged cross-product and extracting its principal axes; \citet{Dray2008} generalized this approach to other one-table ordinations. For two tables observed at the same sites, co-inertia analysis~\citep{DoledecChessel1994} finds linear combinations, one per table, with maximal covariance. This is the singular value decomposition of the cross-covariance matrix, also known as inter-battery factor analysis, PLS--SVD, and maximum covariance analysis~\citep{Tucker1958,Wegelin2000,Bretherton1992}. \citet{Dray2002} extended co-inertia analysis to two samples at different sites by linking the sites of one sample to nearby sites of the other through a neighborhood graph; the result is an instance of RLQ analysis~\citep{Doledec1996}, in which two tables are linked through a third.

CoPro follows this spatially weighted route and introduces spatial kernel restricted canonical correlation analysis (skrCCA), which connects cells of different types through a spatial kernel~\citep{Miao2026}. With a Gaussian kernel, the bandwidth specifies the spatial scale of the association being summarized. Such criteria define how spatial weights determine the association axes. Relating those axes to population associations and quantifying their sampling variability requires an explicit statistical model.

We study skrCCA as a general framework for associating spatial data blocks observed at different locations. The organizing question is when spatially coupled, unpaired data support identifiable and reliably estimated association axes.

We analyze the covariance form after feature reduction, in the regime where the number of retained features or principal components is small relative to the number of spatial units. Conditional on the observed positions, we identify the population targets and derive sampling-variability bounds under explicit covariance models. We connect matrix-estimation error to recovery of association directions under spectral-separation or structured-factor assumptions, separate statistical error from optimization error, and quantify how earlier-axis error propagates through sequential deflation. Together, these results describe how spatial dependence, numerical approximation, and sequential extraction affect the interpretation of fitted associations.

The paper is organized as follows. Section~\ref{sec:operators} defines the association matrices and their population targets. Section~\ref{sec:null} derives null laws and matrix-error bounds, and Section~\ref{sec:two-block} studies recovery for two blocks. Sections~\ref{sec:optimization} and~\ref{sec:multiblock-identifiability} address computation and population recovery for several blocks, respectively. Section~\ref{sec:deflation} treats higher-order components and deflation. Section~\ref{sec:diagnostics} translates the results into practitioner diagnostics, and Section~\ref{sec:discussion} discusses their implications and limitations.

\section{Kernel-coupled association matrices}
\label{sec:operators}
To study spatial association across types, a criterion must be specified that links observations of different types. This section examines how the components of that criterion determine the association being estimated. A fitted direction can then be read as an estimate of a population quantity, and the analysis choices that change that quantity can be separated from those that leave it unchanged.

\subsection{Data blocks, spatial kernels, and pair scaling}
\label{sec:blocks}
We use a block to mean the observations belonging to one biological type or data modality. In the spatial-transcriptomics application of CoPro, block $b$ contains the cells assigned to cell type $b$; its rows are cells and its columns are gene-level features or retained principal components. More generally, a block may represent any collection of spatial units for which the same set of features is measured.

Let $B\ge2$ denote the number of blocks. Block $b$ contains $n_b$ spatial units and $p_b$ features,
\begin{equation}
X_b=
\begin{pmatrix}
x_b(s^{(b)}_1)^\top\\
\vdots\\
x_b(s^{(b)}_{n_b})^\top
\end{pmatrix}\in\R^{n_b\times p_b},\qquad b=1,\ldots,B.
\label{eq:data}
\end{equation}
For the algebraic and optimization results, empirical centering may be imposed as a preprocessing convention,
\begin{equation}
\1_{n_b}^\top X_b=0\in\R^{1\times p_b}.
\label{eq:centering}
\end{equation}
The stochastic results of Section~\ref{sec:null} instead place a probability model directly on the score vectors or data matrices.

For two blocks $b,c$, let
\[
K_{bc}\in\R^{n_b\times n_c},\qquad K_{cb}=K_{bc}^\top,
\]
be a deterministic nonnegative kernel matrix. The kernel need not be Gaussian; we assume only positive total mass
\begin{equation}
\kappa_{bc}=\1_{n_b}^\top K_{bc}\1_{n_c}>0.
\label{eq:kappa}
\end{equation}
Dividing by total kernel mass expresses each pairwise association as a weighted average. Define the mass-normalized pair association matrix
\begin{equation}
\Ahat_{bc}=\frac{1}{\kappa_{bc}}X_b^\top K_{bc}X_c\in\R^{p_b\times p_c},\qquad \Ahat_{cb}=\Ahat_{bc}^\top.
\label{eq:Ahat}
\end{equation}
The pair matrix \eqref{eq:Ahat} is the cross-type analogue, with a rectangular kernel, of Wartenberg's spatially lagged cross-product $Z^\top WZ$ for a single sample~\citep{Wartenberg1985}. For nonnegative symmetric pair weights $w_{bc}=w_{cb}$ with $w_{bb}=0$, define
\begin{equation}
\Fhat(u_1,\ldots,u_B)=\sum_{1\le b<c\le B}w_{bc}u_b^\top\Ahat_{bc}u_c,
\qquad u_b\in\mathbb S^{p_b-1}.
\label{eq:multiobjective}
\end{equation}
This is a SUMCOR-type multiset association criterion~\citep{Kettenring1971}. With Euclidean unit constraints, it is a weighted MAXDIFF criterion with spatial kernels replacing the usual cross-products~\citep{tenBergeKnol1984,vandeGeer1984,tenBerge1988,HanafiKiers2006}. Using the unnormalized matrix $X_b^\top K_{bc}X_c$ is equivalent to replacing $w_{bc}$ by $w_{bc}\kappa_{bc}$. With two blocks this positive scalar does not change singular vectors. With three or more blocks, pair scaling changes relative edge weights and can change fitted directions. We allow zero pair weights to cover analyses restricted to selected pairs.

For a single tissue section, CoPro (package version 1.3.0) maximizes $\sum_{b<c}u_b^\top X_b^\top K_{bc}X_cu_c$ over unit vectors, which is \eqref{eq:multiobjective} with $w_{bc}=\kappa_{bc}$. When $S$ sections are fitted jointly with shared directions under its summed-covariance option, the objective is $\sum_s\sum_{b<c}u_b^\top X_{b,s}^\top K_{bc,s}X_{c,s}u_c$, which is again \eqref{eq:multiobjective} with
\begin{equation}
\Ahat_{bc}=\frac{\sum_{s}X_{b,s}^\top K_{bc,s}X_{c,s}}{\kappa_{bc,\rm tot}},\qquad
\kappa_{bc,\rm tot}=\sum_{s}\kappa_{bc,s},\qquad w_{bc}=\kappa_{bc,\rm tot}.
\label{eq:multisection}
\end{equation}
The results below apply to \eqref{eq:multisection} with the block-diagonal kernel over the stacked sections in place of a single-section kernel. CoPro's default multisection objective instead sums section-specific correlation ratios with direction-dependent denominators and lies outside the present analysis (Section~\ref{sec:discussion}). For the covariance-form problem, Section~\ref{sec:two-block-solution} gives the exact two-block SVD solution; Sections~\ref{sec:optimization} and~\ref{sec:deflation} analyze exact cyclic updates and projection deflation for multiple blocks.

\subsection{Finite-design population target}
We treat the observed spatial positions as fixed. Let
\[
C_{bc}(s,t)=\E\{x_b(s)x_c(t)^\top\}\in\R^{p_b\times p_c}
\]
be the cross-second-moment array of the input features, conditional on the full observed design. If the features are empirically centered, the $n_b\times n_c$ matrix of raw cross-moments for each feature pair is pre- and post-multiplied by the centering projections $H_b=I_{n_b}-n_b^{-1}\1\1^\top$ and $H_c$, so the centered array $C_{bc}$ depends on the full set of observed sites. The expectation of the pair matrix is
\begin{equation}
\Abar_{bc}:=\E(\Ahat_{bc})
=\frac{1}{\kappa_{bc}}\sum_{i=1}^{n_b}\sum_{j=1}^{n_c}(K_{bc})_{ij}
C_{bc}(s_i^{(b)},s_j^{(c)}).
\label{eq:pop}
\end{equation}
Under a mean-zero model this is a kernel-weighted average of cross-covariances at the observed spatial separations. Changing kernel bandwidth or shape generally changes $\Abar_{bc}$ and therefore changes the population singular system, so sensitivity to the kernel can reflect a change in the spatial scale being summarized.

\subsection{Two-block solution}
\label{sec:two-block-solution}
For $B=2$, write $\Ahat=\Ahat_{12}\in\R^{p\times q}$ with $p=p_1$ and $q=p_2$, and write $u=u_1$, $v=u_2$. The order-one problem is
\begin{equation}
\max_{\|u\|_2=\|v\|_2=1}u^\top\Ahat v.
\label{eq:two}
\end{equation}
By the variational characterization of singular values, its optimum is the leading singular value $\widehat\sigma_1$ of $\Ahat$, attained at the leading left and right singular vectors up to a joint sign. Successive within-block orthogonality constraints give the $k$th singular pair for $k\le\min(p,q)$, and within a tied singular subspace the basis is not unique.

We retain CoPro's name skrCCA for this covariance criterion, whose value need not be bounded by one; Section~\ref{sec:standardization} treats the standardized forms.

\subsection{Normalization and preprocessing invariances}
Two elementary invariances fix the baseline for interpreting sensitivity to sampling and preprocessing. First, hold the pair weights $w_{bc}$ in \eqref{eq:multiobjective} fixed and replicate every observation of block $b$, together with its location, exactly $m_b\ge1$ times, repeating the corresponding kernel rows and columns. The numerator and the denominator of \eqref{eq:Ahat} are both multiplied by $m_bm_c$, so every mass-normalized pair matrix and the objective are unchanged. This concerns deterministic multiplicity only: a smaller or spatially less diverse block can still have larger variance, and selective spatial sampling changes the weighted design in \eqref{eq:pop}. If the unnormalized objective is used, with implicit weights proportional to $\kappa_{bc}$, uniform replication changes those weights and the invariance no longer holds.

Second, let $P_b\in\R^{p_b\times r_b}$ have orthonormal columns, let $Z_b=X_bP_b$ be a reduced feature representation, and let $\Ahat^P_{bc}=P_b^\top\Ahat_{bc}P_c$ be the reduced pair matrix. Replacing $P_b$ by $P_bQ_b$ with $Q_b$ orthogonal replaces $\Ahat^P_{bc}$ by $Q_b^\top\Ahat^P_{bc}Q_c$ and a solution $a_b$ by $Q_b^\top a_b$, while the reconstructed scores $X_bP_ba_b$ are unchanged. A program interpreted through its cell scores therefore has the same meaning in every orthonormal basis of the retained subspace.

\subsection{Standardized spatial association and whitening}
\label{sec:standardization}
Feature scaling determines which combinations a covariance objective favors. Fixed marginal metrics can be absorbed by a change of coordinates, whereas standardization using the spatial kernel's marginal weights defines a different target.

\begin{proposition}[Fixed-metric whitening equivalence]
\label{prop:fixedmetric}
Let $G_x\in\R^{p\times p}$ and $G_y\in\R^{q\times q}$ be fixed positive-definite matrices. Consider
\begin{equation}
\max_{u^\top G_xu=1,\;v^\top G_yv=1}u^\top\Ahat v.
\label{eq:fixedmetric}
\end{equation}
Define
\begin{equation}
\widetilde A=G_x^{-1/2}\Ahat G_y^{-1/2}.
\label{eq:whitenedA}
\end{equation}
Then \eqref{eq:fixedmetric} is exactly equivalent, under $\widetilde u=G_x^{1/2}u$ and $\widetilde v=G_y^{1/2}v$, to
\[
\max_{\|\widetilde u\|_2=\|\widetilde v\|_2=1}\widetilde u^\top\widetilde A\widetilde v.
\]
Consequently the maximizing directions are obtained from the singular vectors of $\widetilde A$ by
$u=G_x^{-1/2}\widetilde u$ and $v=G_y^{-1/2}\widetilde v$. The same coordinate change applies blockwise to the multiblock objective when each block uses one fixed positive-definite metric.
\end{proposition}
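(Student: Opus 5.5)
The plan is a direct change of variables followed by an appeal to the two-block solution in Section~\ref{sec:two-block-solution}. Because $G_x$ and $G_y$ are symmetric positive definite, they have unique symmetric positive-definite square roots $G_x^{1/2}$ and $G_y^{1/2}$. These roots are invertible, with inverses $G_x^{-1/2}$ and $G_y^{-1/2}$. Consider the linear map $\phi(u,v)=(G_x^{1/2}u,\,G_y^{1/2}v)$. It is a bijection of $\R^p\times\R^q$ onto itself, and its inverse is $(\widetilde u,\widetilde v)\mapsto(G_x^{-1/2}\widetilde u,\,G_y^{-1/2}\widetilde v)$.

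\textbf{Feasible sets.} The first step is to check that $\phi$ maps the feasible set of \eqref{eq:fixedmetric} exactly onto the product of unit spheres. Using the symmetry of the square root,
\[
u^\top G_xu=u^\top G_x^{1/2}G_x^{1/2}u=\|G_x^{1/2}u\|_2^2=\|\widetilde u\|_2^2,
\]
and likewise $v^\top G_yv=\|\widetilde v\|_2^2$. Hence the constraint $u^\top G_xu=1,\ v^\top G_yv=1$ holds if and only if $\|\widetilde u\|_2=\|\widetilde v\|_2=1$.

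\textbf{Objectives.} The second step is to check that the objectives agree pointwise under $\phi$. Substituting $u=G_x^{-1/2}\widetilde u$ and $v=G_y^{-1/2}\widetilde v$ gives
\[
u^\top\Ahat v=\widetilde u^\top G_x^{-1/2}\Ahat G_y^{-1/2}\widetilde v=\widetilde u^\top\widetilde A\widetilde v,
\]
by \eqref{eq:whitenedA} and symmetry of $G_x^{-1/2}$. A bijection between feasible sets that preserves objective values preserves optimal values. It also carries maximizers to maximizers in both directions. This is the claimed exact equivalence.

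\textbf{Maximizers.} The third step applies the variational characterization recalled after \eqref{eq:two} to $\widetilde A$. The whitened problem attains the leading singular value of $\widetilde A$ at its leading singular pair, up to joint sign and up to non-uniqueness within tied singular subspaces. Pulling this pair back through $\phi^{-1}$ gives $u=G_x^{-1/2}\widetilde u$ and $v=G_y^{-1/2}\widetilde v$. For higher-order components, the constraints $u_k^\top G_xu_j=0$ map to $\widetilde u_k^\top\widetilde u_j=0$ by the same identity. So if the successive problems are posed with $G$-orthogonality constraints, the $k$th solution is the $k$th singular pair of $\widetilde A$.

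\textbf{Multiblock case.} Apply $\widetilde u_b=G_b^{1/2}u_b$ separately in each block. Each term of \eqref{eq:multiobjective} becomes $w_{bc}\,\widetilde u_b^\top G_b^{-1/2}\Ahat_{bc}G_c^{-1/2}\widetilde u_c$. Each constraint $u_b^\top G_bu_b=1$ becomes a unit-sphere constraint. The map is again a product bijection that preserves objective values.

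There is no substantive obstacle. The only points needing care are to use the symmetric square root, so that the quadratic-form identity holds and the inverse map has the stated form, and to state the non-uniqueness caveats for signs and tied singular values.
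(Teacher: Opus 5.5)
Your proposal is correct and follows the same route as the paper: its proof is just the substitution $u=G_x^{-1/2}\widetilde u$, $v=G_y^{-1/2}\widetilde v$ plus bijectivity of these maps, and you have written out the checks it leaves implicit (the identity $u^\top G_xu=\|G_x^{1/2}u\|_2^2$, pointwise equality of objectives, and transfer of maximizers). Your further observation, that $G$-orthogonality constraints on successive components become Euclidean orthogonality after the change of variables, is a correct extension the paper does not state.
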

\begin{proof}
Substitute $u=G_x^{-1/2}\widetilde u$ and $v=G_y^{-1/2}\widetilde v$; both maps are bijective because the metrics are positive definite.
\end{proof}

A common score-standardized two-block criterion takes
\begin{equation}
\mathcal C_{\rm marg}(u,v)
=\frac{u^\top\Ahat v}
{\{u^\top S_xu\}^{1/2}\{v^\top S_yv\}^{1/2}},
\qquad
S_x=\frac{1}{n_x}X^\top X,\quad S_y=\frac{1}{n_y}Y^\top Y.
\label{eq:margstandard}
\end{equation}
For observed positive-definite $S_x,S_y$, Proposition~\ref{prop:fixedmetric} applies conditionally on those matrices. Thus the optimization is an SVD after marginal whitening. The value in \eqref{eq:margstandard}, however, need not be bounded by one, because its cross-term and its two marginal quadratic forms need not be blocks of the same positive-semidefinite covariance matrix.

\begin{corollary}[Exact coincidence after block whitening]
\label{cor:pca-whitening}
Suppose preprocessing yields $S_x=c_xI_p$ and $S_y=c_yI_q$ for constants $c_x,c_y>0$. Then, over Euclidean unit vectors,
\[
\mathcal C_{\rm marg}(u,v)=\frac{1}{\sqrt{c_xc_y}}u^\top\Ahat v.
\]
Hence the score-standardized criterion \eqref{eq:margstandard} and the covariance-form criterion \eqref{eq:two} have exactly the same maximizing directions and singular subspaces. In particular, if retained PCA scores are scaled so that their empirical covariance is the identity, then $c_x=c_y=1$.
\end{corollary}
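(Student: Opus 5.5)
The plan is a direct substitution into \eqref{eq:margstandard}, followed by an appeal to the variational characterization of singular values from Section~\ref{sec:two-block-solution}. First, under the hypothesis $S_x=c_xI_p$ and $S_y=c_yI_q$, any Euclidean unit vectors $u\in\mathbb S^{p-1}$ and $v\in\mathbb S^{q-1}$ satisfy
\[
u^\top S_xu=c_x\|u\|_2^2=c_x,\qquad v^\top S_yv=c_y .
\]
Both denominators are therefore positive constants that do not depend on $(u,v)$. Substituting them gives $\mathcal C_{\rm marg}(u,v)=(c_xc_y)^{-1/2}u^\top\Ahat v$ on the product of unit spheres, which is the displayed identity.

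Second, multiplying an objective by the positive constant $(c_xc_y)^{-1/2}$ leaves its set of maximizers unchanged. It also leaves unchanged the maximizers under any additional within-block orthogonality constraints used for successive components. So the order-one maximizers of $\mathcal C_{\rm marg}$ over unit vectors coincide with those of \eqref{eq:two}, namely the leading singular pairs of $\Ahat$ up to a joint sign. Applying the same argument at each order $k$ shows that the $k$th directions, and the tied singular subspaces, agree as well.

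As a cross-check, the result should match Proposition~\ref{prop:fixedmetric} with $G_x=S_x$ and $G_y=S_y$. In that case the whitened matrix is $\widetilde A=(c_xc_y)^{-1/2}\Ahat$, which has the same singular vectors as $\Ahat$. The back-transformation $u=c_x^{-1/2}\widetilde u$ only rescales, so it does not change directions.

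Finally, for the PCA remark: when the retained scores are scaled to have empirical covariance $S_x=I_p$ and $S_y=I_q$, the hypothesis holds with $c_x=c_y=1$. There is no substantive obstacle in this argument. The one point that needs care is that the statement restricts to Euclidean unit vectors; over unrestricted $(u,v)$ the criterion is scale-invariant, so restricting to unit vectors loses no generality and the correspondence of directions still holds.
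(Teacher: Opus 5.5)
Your proposal is correct and follows the same route as the paper. The paper also notes that $u^\top S_xu=c_x$ and $v^\top S_yv=c_y$ for unit $u,v$, so the denominator of \eqref{eq:margstandard} is constant over the feasible set, and the two criteria share maximizers and singular subspaces.
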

\begin{proof}
For unit $u,v$, $u^\top S_xu=c_x$ and $v^\top S_yv=c_y$, so the denominator in \eqref{eq:margstandard} is constant over the feasible set.
\end{proof}

Replacing the marginal covariances by the identity is also the device used in penalized matrix-decomposition CCA~\citep{Witten2009}.

The criterion \eqref{eq:margstandard} also differs from CoPro's reported kernel-normalized association score~\citep{Miao2026}. For centered scores $a=Xu$ and $b=Yv$, the implemented statistic (package version 1.3.0) is
\begin{equation}
\mathrm{NCorr}(u,v)=\frac{a^\top Kb}{\|a\|_2\,\|b\|_2\,\|R_x^{1/2}K_cR_y^{1/2}\|_\F},
\qquad K_c=H_xKH_y,
\label{eq:ncorr}
\end{equation}
where $H_x,H_y$ are the centering projections and $R_x,R_y$ are within-type correlation operators, treated as fixed in the variance calculation below. By Theorem~\ref{thm:variance}, if $a$ and $b$ are independent with covariances $\sigma_a^2H_xR_xH_x$ and $\sigma_b^2H_yR_yH_y$, the kernel factor in \eqref{eq:ncorr} equals the null standard deviation of $a^\top Kb$ divided by $\sigma_a\sigma_b$, so \eqref{eq:ncorr} is the null-standardized numerator divided by $\|a\|_2\|b\|_2/(\sigma_a\sigma_b)$. The score norms are random and computed from the same data, and the directions are fitted. Thus \eqref{eq:ncorr} is a descriptive scale for comparing bandwidths rather than an exactly standardized statistic, and the fixed-direction results do not supply its null law for fitted directions.

A correlation coefficient requires the cross-term and both marginal variances to refer to the same distribution of paired observations. A spatial kernel supplies such a distribution by treating its nonnegative weights as masses on cross-type edges.

For two blocks, define
\[
d_x=K\1_{n_y},\qquad d_y=K^\top\1_{n_x},\qquad
D_x=\operatorname{diag}(d_x),\qquad D_y=\operatorname{diag}(d_y),\qquad
\kappa=\1_{n_x}^\top K\1_{n_y}.
\]
Define the kernel-edge marginal means
\begin{equation}
\mu_{x,K}=\frac{1}{\kappa}X^\top d_x,
\qquad
\mu_{y,K}=\frac{1}{\kappa}Y^\top d_y,
\label{eq:edge-means}
\end{equation}
and centered matrices
$X_K=X-\1_{n_x}\mu_{x,K}^\top$ and $Y_K=Y-\1_{n_y}\mu_{y,K}^\top$. Define
\begin{equation}
\Sigma_{xx,K}=\frac{1}{\kappa}X_K^\top D_xX_K,\quad
\Sigma_{yy,K}=\frac{1}{\kappa}Y_K^\top D_yY_K,\quad
\Sigma_{xy,K}=\frac{1}{\kappa}X_K^\top K Y_K.
\label{eq:edge-covs}
\end{equation}

\begin{proposition}[Kernel-edge canonical correlation]
\label{prop:edgecca}
Let a random pair of row indices $(I,J)$ have probability
\begin{equation}
\Pr(I=i,J=j)=\frac{K_{ij}}{\kappa}.
\label{eq:edge-dist}
\end{equation}
Then the three matrices in \eqref{eq:edge-covs} are exactly the covariance blocks of the random endpoint pair $(x_I,y_J)$ after subtracting the means in \eqref{eq:edge-means}. Consequently,
\begin{equation}
\begin{pmatrix}
\Sigma_{xx,K}&\Sigma_{xy,K}\\
\Sigma_{xy,K}^\top&\Sigma_{yy,K}
\end{pmatrix}\succeq0.
\label{eq:edge-psd}
\end{equation}
If $\Sigma_{xx,K}$ and $\Sigma_{yy,K}$ are positive definite, define
\begin{equation}
R_K=\Sigma_{xx,K}^{-1/2}\Sigma_{xy,K}\Sigma_{yy,K}^{-1/2}.
\label{eq:edge-R}
\end{equation}
Every singular value of $R_K$ lies in $[0,1]$, and the leading singular value equals
\begin{equation}
\max_{u^\top\Sigma_{xx,K}u=1,\;v^\top\Sigma_{yy,K}v=1}
 u^\top\Sigma_{xy,K}v.
\label{eq:edge-cca}
\end{equation}
Thus \eqref{eq:edge-cca} is a canonical correlation for a cross-type pair drawn with probability proportional to the spatial kernel weight.
\end{proposition}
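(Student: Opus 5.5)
The plan has three steps: identify the marginals of the kernel-edge distribution, use them to rewrite each matrix in \eqref{eq:edge-covs} as an expectation over $(I,J)$, and then reduce the singular-value claims to a unit-norm argument after whitening.

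\emph{Marginals of $(I,J)$.} Under \eqref{eq:edge-dist}, the marginal of $I$ is $\Pr(I=i)=\sum_j K_{ij}/\kappa=(d_x)_i/\kappa$. Likewise $\Pr(J=j)=(d_y)_j/\kappa$. This is a valid probability distribution because $K\ge0$ entrywise and $\kappa>0$ by \eqref{eq:kappa}. It follows immediately that
\[
\E(x_I)=\kappa^{-1}\sum_i (d_x)_i x_i=\kappa^{-1}X^\top d_x=\mu_{x,K},
\]
and similarly $\E(y_J)=\mu_{y,K}$, which matches \eqref{eq:edge-means}. The rows of $X_K$ and $Y_K$ are therefore exactly the centered endpoints $x_i-\mu_{x,K}$ and $y_j-\mu_{y,K}$.

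\emph{Covariance blocks.} For the within-block term,
\[
\Cov(x_I)=\sum_i \frac{(d_x)_i}{\kappa}(x_i-\mu_{x,K})(x_i-\mu_{x,K})^\top=\kappa^{-1}X_K^\top D_xX_K=\Sigma_{xx,K},
\]
and the same computation gives $\Cov(y_J)=\Sigma_{yy,K}$. For the cross term,
\[
\Cov(x_I,y_J)=\sum_{i,j}\frac{K_{ij}}{\kappa}(x_i-\mu_{x,K})(y_j-\mu_{y,K})^\top=\kappa^{-1}X_K^\top KY_K=\Sigma_{xy,K}.
\]
Positive semidefiniteness \eqref{eq:edge-psd} then holds because the block matrix is the covariance matrix of the stacked vector $(x_I^\top,y_J^\top)^\top$. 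Concretely, for any $(a,b)$ the quadratic form equals $\Var(a^\top x_I+b^\top y_J)\ge0$.

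\emph{Singular values of $R_K$.} Assume $\Sigma_{xx,K}$ and $\Sigma_{yy,K}$ are positive definite. Apply Proposition~\ref{prop:fixedmetric} with $G_x=\Sigma_{xx,K}$, $G_y=\Sigma_{yy,K}$ and $\Ahat$ replaced by $\Sigma_{xy,K}$. The problem \eqref{eq:edge-cca} then becomes the maximization of $\widetilde u^\top R_K\widetilde v$ over unit vectors, and by the variational characterization of singular values (Section~\ref{sec:two-block-solution}) its value is the leading singular value of $R_K$. Singular values are nonnegative by definition, so the remaining claim is the upper bound $\le1$, which is the only step with real content. I would use a Schur-complement argument. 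Congruence of \eqref{eq:edge-psd} by $\operatorname{diag}(\Sigma_{xx,K}^{-1/2},\Sigma_{yy,K}^{-1/2})$ gives
\[
\begin{pmatrix}I&R_K\\R_K^\top&I\end{pmatrix}\succeq0 .
\]
This forces $I-R_K^\top R_K\succeq0$, hence $\|R_K\|_{\op}\le1$. Alternatively, for unit $\widetilde u,\widetilde v$, set $u=\Sigma_{xx,K}^{-1/2}\widetilde u$ and $v=\Sigma_{yy,K}^{-1/2}\widetilde v$. Then $u^\top\Sigma_{xy,K}v=\Cov(u^\top x_I,v^\top y_J)$, which is at most $1$ in absolute value by Cauchy--Schwarz since both variances equal one. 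This also makes the interpretation as a canonical correlation of the edge-sampled pair immediate.
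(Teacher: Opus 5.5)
Your proposal is correct and follows essentially the paper's proof. You compute the marginals of $(I,J)$, identify the three matrices as edge-distribution covariance blocks, invoke Proposition~\ref{prop:fixedmetric} for the variational formula, and bound $\|R_K\|_{\op}\le1$ by Cauchy--Schwarz on unit-variance whitened scores (your alternative argument is exactly the paper's); the Schur-complement congruence you lead with is an equally valid way to get the same bound directly from \eqref{eq:edge-psd}.
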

\begin{proof}
Under \eqref{eq:edge-dist}, $\Pr(I=i)=(d_x)_i/\kappa$ and $\Pr(J=j)=(d_y)_j/\kappa$. Therefore the expectations, marginal covariance matrices, and cross-covariance of $(x_I,y_J)$ are exactly \eqref{eq:edge-means}--\eqref{eq:edge-covs}; hence \eqref{eq:edge-psd} is a covariance matrix and is positive semidefinite. Proposition~\ref{prop:fixedmetric} gives the variational formula \eqref{eq:edge-cca}. For unit $a,b$, the random variables $a^\top\Sigma_{xx,K}^{-1/2}(x_I-\mu_{x,K})$ and $b^\top\Sigma_{yy,K}^{-1/2}(y_J-\mu_{y,K})$ each have variance one. Cauchy--Schwarz therefore gives $|a^\top R_Kb|\le1$. Taking the supremum over unit $a,b$ gives $\|R_K\|_{\op}\le1$, and all singular values are nonnegative and no larger than one.
\end{proof}

Proposition~\ref{prop:edgecca} is the kernel analogue of RLQ analysis, in which the link table is treated by correspondence analysis~\citep{Doledec1996,Dray2002}. If the data have first been centered using the kernel-edge means, then $\Sigma_{xy,K}$ equals the mass-normalized cross-term in \eqref{eq:Ahat}; the difference is that the marginal metrics in \eqref{eq:edge-covs} use the same edge distribution as that cross-term. Ordinary PCA whitening with respect to $X^\top X/n_x$ and $Y^\top Y/n_y$ does not generally make $\Sigma_{xx,K}$ and $\Sigma_{yy,K}$ equal to identity, because the kernel-edge marginal weights need not be uniform.

\section{Statistical variability of pairwise association matrices}
\label{sec:null}
Spatially smooth expression patterns can yield appreciable sample cross-association even when the two types are independent. Assessing that association requires a variability calculation that accounts for dependence within each type and the reuse of observations across kernel-weighted pairs. Recovery of fitted programs further requires a measure of error for the entire association matrix.

\subsection{Mean and variance under cross-block independence}
For prespecified gene programs, the first calibration question is how much their weighted association would fluctuate under cross-type independence.

Consider $X\in\R^{n_x\times p}$ and $Y\in\R^{n_y\times q}$ and fixed $u\in\R^p$, $v\in\R^q$. Define score vectors
\begin{equation}
a=Xu\in\R^{n_x},\qquad b=Yv\in\R^{n_y},
\label{eq:scores}
\end{equation}
and
\begin{equation}
T(u,v)=u^\top\Ahat v=\frac{a^\top Kb}{\kappa},\qquad \kappa=\1^\top K\1.
\label{eq:T}
\end{equation}
The statistic $T(u,v)$ is a Mantel-type weighted cross-product~\citep{Mantel1967,Wartenberg1985}. For the null calculation, $a,b$ are random while $u,v,K$ and the positions are fixed. If the scores are empirically centered, the centering projection is part of the map generating them: with $H_x=I-n_x^{-1}\1\1^\top$ and $H_y=I-n_y^{-1}\1\1^\top$, the centered scores $a=H_xa_{\rm raw}$ and $b=H_yb_{\rm raw}$ have covariances $H_x\Gamma_{a,\rm raw}H_x$ and $H_y\Gamma_{b,\rm raw}H_y$, and the covariances below are those of the scores that enter the statistic. When PCA loadings or whitening matrices are learned from the same observations, the distributional assumptions must hold for the transformed inputs; Gaussian raw data alone do not establish them.

\begin{theorem}[Exact fixed-direction null variance]
\label{thm:variance}
Assume $a$ and $b$ are independent with $\E a=0$ and $\E b=0$. Let $\Gamma_a=\Cov(a)$ and $\Gamma_b=\Cov(b)$. Then
\begin{equation}
\E_0T(u,v)=0,\qquad
\Var_0\{T(u,v)\}
=\frac{\tr(\Gamma_aK\Gamma_bK^\top)}{\kappa^2}
=\frac{\|\Gamma_a^{1/2}K\Gamma_b^{1/2}\|_\F^2}{\kappa^2}.
\label{eq:variance}
\end{equation}
\end{theorem}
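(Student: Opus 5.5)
The plan is to compute the first two moments of the bilinear form $a^\top Kb$ directly, using independence to factor expectations, and then divide by $\kappa^2$.

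\emph{Mean.} Since $K$ is deterministic, $\E(a^\top Kb)=\sum_{i,j}K_{ij}\E(a_ib_j)$. By independence $\E(a_ib_j)=\E(a_i)\E(b_j)=0$, so $\E_0T(u,v)=0$. Finite second moments are implicit in the existence of $\Gamma_a$ and $\Gamma_b$, and they justify all the interchanges of expectation and finite sums below.

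\emph{Variance.} Because the mean is zero,
\[
\Var_0\{T(u,v)\}=\kappa^{-2}\E\{(a^\top Kb)^2\}.
\]
Write
\[
(a^\top Kb)^2=a^\top K b b^\top K^\top a=\tr(aa^\top K bb^\top K^\top).
\]
To take the expectation, I would condition on $b$, or equivalently use independence to factor $\E(a_ia_k b_jb_l)=\E(a_ia_k)\E(b_jb_l)$. Linearity of trace and expectation then gives
\[
\E\{(a^\top Kb)^2\}=\tr\bigl(\E(aa^\top)\,K\,\E(bb^\top)\,K^\top\bigr).
\]
Since both means are zero, $\E(aa^\top)=\Gamma_a$ and $\E(bb^\top)=\Gamma_b$. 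This yields the first expression in \eqref{eq:variance}.

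\emph{Frobenius form.} $\Gamma_a$ and $\Gamma_b$ are covariance matrices, hence positive semidefinite, so their symmetric square roots exist. Applying cyclicity of the trace,
\[
\tr(\Gamma_aK\Gamma_bK^\top)=\tr\bigl(\Gamma_a^{1/2}K\Gamma_b^{1/2}\Gamma_b^{1/2}K^\top\Gamma_a^{1/2}\bigr)=\|\Gamma_a^{1/2}K\Gamma_b^{1/2}\|_\F^2 .
\]

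The only step needing care is the factorization of the fourth-order moment. It relies solely on independence of $a$ and $b$, not on Gaussianity, so no higher cumulants of $a$ or $b$ enter. Everything else is routine bookkeeping.
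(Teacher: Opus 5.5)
Your proof is correct and takes essentially the same route as the paper. Both arguments use independence and zero means to make the first moment vanish. Both reduce the second moment to $\tr(\Gamma_aK\Gamma_bK^\top)$: the paper does this by integrating out $b$ first, while you factor the fourth-order moment entrywise. Both then obtain the Frobenius form by cyclicity of the trace.
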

\begin{proof}
Independence and zero means give $\E(a^\top Kb)=0$. Moreover,
\[
\E(a^\top Kb)^2
=\E\{a^\top Kbb^\top K^\top a\}
=\E\{a^\top K\Gamma_bK^\top a\}
=\tr(\Gamma_aK\Gamma_bK^\top).
\]
The Frobenius identity follows by cyclicity of trace. Divide by $\kappa^2$.
\end{proof}
The same trace form appears, for a square weight matrix on a common set of sites, in the sample-covariance variance of \citet[eq.~2.2]{Dutilleul1993}, derived there under joint Gaussianity. Equation~\eqref{eq:variance} needs only independence and finite second moments, and with identity pairing it reduces to the variance used by \citet[eq.~2.3]{Clifford1989}.

To express~\eqref{eq:variance} on a familiar sample-size scale, we use the average marginal variances $\sigma_a^2=\tr(\Gamma_a)/n_x$ and $\sigma_b^2=\tr(\Gamma_b)/n_y$. Define, when the null variance is positive,
\begin{equation}
\neff(u,v)=\frac{\sigma_a^2\sigma_b^2}{\Var_0\{T(u,v)\}}
=\frac{\kappa^2\sigma_a^2\sigma_b^2}{\tr(\Gamma_aK\Gamma_bK^\top)}.
\label{eq:neff}
\end{equation}
Equivalently, put $R_a=\Gamma_a/\sigma_a^2$ and $R_b=\Gamma_b/\sigma_b^2$ to obtain $\neff=\kappa^2/\|R_a^{1/2}KR_b^{1/2}\|_\F^2$. The matrices $R_a,R_b$ need not have unit diagonals after empirical centering; only under constant position variance are they correlation matrices.

For uncentered scores that are homoscedastic and uncorrelated across positions, the average-variance convention gives $R_a=R_b=I$; if also $n_x=n_y=n$ and $K=I_n$, then $\neff=n$, and for general $K$, $\neff=\kappa^2/\|K\|_\F^2$. With identity pairing and homoscedastic spatial autocorrelation, $\neff=n^2/\tr(R_aR_b)$, matching the variance calculation of \citet[eqs.~2.6--2.10]{Clifford1989}. Analytic bivariate-Moran nulls that treat the spatial units as independent, as in SpatialDM~\citep{Li2023}, correspond to $R_a=R_b=I$. For centered sample correlations, \citet[Sections~2--3]{Dutilleul1993} incorporates the centering projection in the covariance calculation and proposes an approximate $t$ test with estimated effective degrees of freedom. Here $\neff$ summarizes the variance of a bilinear numerator; it does not by itself supply a $t$ reference distribution.

For the general formula, $R_a,R_b$ may depend on the chosen directions. A separable row--feature covariance model~\citep{Dawid1981} makes them direction-independent. Let $R_x,R_y$ be row-correlation matrices, $\Sigma_x,\Sigma_y$ positive-semidefinite feature covariance matrices, and $Z_x,Z_y$ have independent mean-zero unit-variance entries. Under
\begin{equation}
X=R_x^{1/2}Z_x\Sigma_x^{1/2},\qquad
Y=R_y^{1/2}Z_y\Sigma_y^{1/2},
\label{eq:separable}
\end{equation}
we have $\Cov(Xu)=(u^\top\Sigma_xu)R_x$ and similarly for $Yv$ (after empirical centering, read $H_xR_xH_x$ for $R_x$ and $H_yR_yH_y$ for $R_y$).

\begin{remark}[Fixed standardization versus a random denominator]
Theorem~\ref{thm:variance} applies unchanged after any \emph{fixed} marginal reparameterization from Proposition~\ref{prop:fixedmetric}: a fixed direction in whitened coordinates corresponds to a fixed transformed direction in the original coordinates, and the theorem calibrates its bilinear numerator exactly. A statistic that divides by marginal scales estimated from the same data has a random denominator, dependent on the numerator, and is not covered; Section~\ref{sec:whitening-perturbation} treats the estimated metrics as a separate perturbation.
\end{remark}

\subsection{Exact Gaussian representation and stable-rank normal approximation}
Under Gaussian score models, the kernel and the spatial covariances determine a spectral representation of the statistic, and the relative contributions of its singular modes determine the accuracy of a normal approximation.

Assume in addition that $a=\sigma_aR_a^{1/2}g$ and $b=\sigma_bR_b^{1/2}g'$ with independent standard Gaussian vectors. Let
\begin{equation}
M=R_a^{1/2}KR_b^{1/2},
\label{eq:M}
\end{equation}
and let $d_1,\ldots,d_r$ be its nonzero singular values.
\begin{theorem}[Exact Gaussian spectral representation]
\label{thm:gausslaw}
Under the preceding assumptions,
\begin{equation}
\frac{\kappa T(u,v)}{\sigma_a\sigma_b}
\overset d=\sum_{\ell=1}^r d_\ell\xi_\ell\eta_\ell,
\label{eq:gausslaw}
\end{equation}
where $\xi_\ell,\eta_\ell$ are mutually independent standard normal variables.
\end{theorem}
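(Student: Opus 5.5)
We substitute the Gaussian representations into the bilinear form and diagonalize the resulting matrix by its singular value decomposition. From \eqref{eq:T}, $\kappa T(u,v)=a^\top Kb$. With $a=\sigma_aR_a^{1/2}g$ and $b=\sigma_bR_b^{1/2}g'$, this gives
\[
\frac{\kappa T(u,v)}{\sigma_a\sigma_b}=g^\top R_a^{1/2}KR_b^{1/2}g'=g^\top Mg',
\]
using the symmetry of the positive-semidefinite square root $R_a^{1/2}$. Thus the statistic is a bilinear form in two independent standard Gaussian vectors $g\in\R^{n_x}$ and $g'\in\R^{n_y}$, with coefficient matrix $M$ from \eqref{eq:M}.

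Next, write the compact SVD $M=\sum_{\ell=1}^r d_\ell p_\ell q_\ell^\top$, where $\{p_\ell\}$ are orthonormal in $\R^{n_x}$ and $\{q_\ell\}$ are orthonormal in $\R^{n_y}$. Define $\xi_\ell=p_\ell^\top g$ and $\eta_\ell=q_\ell^\top g'$. Then $g^\top Mg'=\sum_\ell d_\ell\xi_\ell\eta_\ell$ holds identically, not only in distribution.

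It remains to establish the joint law of the $\xi_\ell$ and $\eta_\ell$.
\begin{itemize}
\item The vector $(\xi_1,\ldots,\xi_r)=P^\top g$, with $P=(p_1,\ldots,p_r)$, is Gaussian with covariance $P^\top P=I_r$. Its coordinates are therefore i.i.d.\ standard normal.
\item The same argument applies to $(\eta_\ell)$, using $Q=(q_1,\ldots,q_r)$.
\item Since $g$ and $g'$ are independent, the $\xi$'s are independent of the $\eta$'s.
\end{itemize}
Hence all $2r$ variables are mutually independent standard normals, which proves \eqref{eq:gausslaw}.

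The only point requiring care is that orthonormality of the singular vectors is what makes the projected Gaussian coordinates independent rather than merely uncorrelated. This follows from joint Gaussianity of $P^\top g$. One can optionally extend $P$ and $Q$ to full orthogonal bases, but the coordinates beyond $r$ carry zero weight and can be dropped.

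As a consistency check, $\Var\{\sum_\ell d_\ell\xi_\ell\eta_\ell\}=\sum_\ell d_\ell^2=\|M\|_\F^2$. This matches Theorem~\ref{thm:variance} with $\Gamma_a=\sigma_a^2R_a$ and $\Gamma_b=\sigma_b^2R_b$.
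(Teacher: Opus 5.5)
Your proof is correct and follows the paper's argument: you substitute the Gaussian representations to obtain $g^\top Mg'$, take an SVD of $M$, and use the fact that orthonormal projections of independent standard Gaussian vectors are independent standard normal coordinates. Your use of the compact SVD and the direct check $P^\top P=I_r$ is an explicit form of the paper's appeal to rotational invariance, and the identity you obtain holds almost surely, which is stronger than equality in distribution.
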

\begin{proof}
Write an SVD $M=UDV^\top$. Rotational invariance gives $U^\top g$ and $V^\top g'$ as independent standard Gaussian vectors, and $g^\top Mg'=\sum_\ell d_\ell\xi_\ell\eta_\ell$.
\end{proof}

The spectral representation suggests measuring how broadly the null variance is distributed across independent modes. Define stable rank $\rstab(M)=\|M\|_\F^2/\|M\|_\op^2$ for $M\ne0$.
\begin{corollary}[Berry--Esseen bound]
Assume $M\ne0$. There is a universal constant $C>0$ such that
\begin{equation}
\sup_z\left|
\Pr\left\{\frac{\kappa T(u,v)}{\sigma_a\sigma_b\|M\|_\F}\le z\right\}-\Phi(z)
\right|
\le C\frac{\sum_\ell|d_\ell|^3}{(\sum_\ell d_\ell^2)^{3/2}}
\le \frac{C}{\sqrt{\rstab(M)}}.
\label{eq:BE}
\end{equation}
\end{corollary}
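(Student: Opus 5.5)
By Theorem~\ref{thm:gausslaw}, the standardized statistic has the same law as $S=\sum_{\ell=1}^r d_\ell\xi_\ell\eta_\ell/\|M\|_\F$, where $\|M\|_\F^2=\sum_\ell d_\ell^2$. The plan is therefore to apply the classical Berry--Esseen theorem for sums of independent, non-identically distributed summands to $S$, and then convert the Lyapunov ratio into a stable-rank bound.

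\textbf{Setting up the summands.} Put $W_\ell=d_\ell\xi_\ell\eta_\ell$. These are independent across $\ell$ because the pairs $(\xi_\ell,\eta_\ell)$ are mutually independent. Each has mean zero and variance $d_\ell^2\E\xi^2\E\eta^2=d_\ell^2$. Hence $\sum_\ell\Var(W_\ell)=\|M\|_\F^2$, and $S=\sum_\ell W_\ell/\|M\|_\F$ has unit variance. The third absolute moment is $\E|W_\ell|^3=|d_\ell|^3(\E|\xi|^3)^2=|d_\ell|^3\cdot 8/\pi$. This is a finite universal multiple of $|d_\ell|^3$, since $\E|\xi|^3=2\sqrt{2/\pi}$.

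\textbf{First inequality.} The non-i.i.d.\ Berry--Esseen theorem states
\[
\sup_z\bigl|\Pr(S\le z)-\Phi(z)\bigr|\le C_0\,\frac{\sum_\ell\E|W_\ell|^3}{\bigl(\sum_\ell\Var W_\ell\bigr)^{3/2}}.
\]
Substituting the moments above gives the middle expression in \eqref{eq:BE} with $C=8C_0/\pi$. Equality in distribution makes the supremum the same for $\kappa T/(\sigma_a\sigma_b\|M\|_\F)$ as for $S$.

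\textbf{Second inequality.} Bound $\sum_\ell|d_\ell|^3\le\max_\ell|d_\ell|\sum_\ell d_\ell^2=\|M\|_\op\|M\|_\F^2$. Dividing by $\|M\|_\F^3$ gives
\[
\frac{\sum_\ell|d_\ell|^3}{\bigl(\sum_\ell d_\ell^2\bigr)^{3/2}}\le\frac{\|M\|_\op}{\|M\|_\F}=\rstab(M)^{-1/2}.
\]
This uses that the singular values are nonnegative and that the largest one equals $\|M\|_\op$. The hypothesis $M\ne0$ is needed only so that the normalization and $\rstab(M)$ are well defined.

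The only point needing care is citing a version of Berry--Esseen for independent, non-identically distributed summands with finite third moments. The summands are products of Gaussians and so are not Gaussian themselves, but their third moments are finite and explicit, so the theorem applies directly and no step poses a real obstacle.
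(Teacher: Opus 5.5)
Your proposal is correct and follows the paper's own proof. You use Theorem~\ref{thm:gausslaw} to reduce to the independent, centered summands $d_\ell\xi_\ell\eta_\ell$ with variance $d_\ell^2$ and finite third moment, apply the non-identically-distributed Berry--Esseen theorem, and bound $\sum_\ell|d_\ell|^3\le(\max_\ell|d_\ell|)\sum_\ell d_\ell^2$. Your explicit value $\E|\xi\eta|^3=8/\pi$ simply makes the paper's universal constant concrete.
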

\begin{proof}
The summands $d_\ell\xi_\ell\eta_\ell$ are independent, centered, and have variance $d_\ell^2$. Classical Berry--Esseen theory~\citep{Petrov1975} gives the first bound because $\E|\xi\eta|^3<\infty$. The second follows from $\sum|d_\ell|^3\le(\max|d_\ell|)\sum d_\ell^2$.
\end{proof}

If $M=0$, Theorem~\ref{thm:gausslaw} gives $T(u,v)=0$ almost surely; this occurs with positive kernel mass for empirically centered scores and $K=\1_{n_x}\1_{n_y}^\top$.

\subsection{From fixed-direction tails to matrix-norm concentration}
Fitted directions are selected by searching over feature combinations, so their analysis requires simultaneous control across directions. An operator-norm bound provides this control by bounding the largest bilinear fluctuation of the association matrix.

\begin{lemma}[Two-sided Gaussian quadratic-form tail]
\label{lem:gaussian-tail}
For symmetric $W$, $g\sim N(0,I)$, and $t>0$,
\[
\Pr\{|g^\top Wg-\tr W|>2\|W\|_\F\sqrt t+2\|W\|_\op t\}
\le2e^{-t}.
\]
\end{lemma}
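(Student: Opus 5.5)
We diagonalize $W$ and reduce the statement to the Laurent--Massart chi-square bounds, applied separately to its positive and negative eigenvalues. Write $W=O\Lambda O^\top$ with $O$ orthogonal and $\Lambda=\operatorname{diag}(\lambda_1,\ldots,\lambda_n)$. By rotational invariance, $h=O^\top g\sim N(0,I)$, so
\[
g^\top Wg-\tr W=\sum_i\lambda_i(h_i^2-1).
\]
This is a weighted sum of independent centered $\chi^2_1$ variables. Moreover $\|W\|_\F^2=\sum_i\lambda_i^2$ and $\|W\|_\op=\max_i|\lambda_i|$.

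\emph{Upper tail.} We use the Chernoff/cumulant argument of Laurent and Massart. For $0\le s<1/(2\max_i|\lambda_i|)$,
\[
\log\E e^{s\lambda_i(h_i^2-1)}=-\tfrac12\log(1-2s\lambda_i)-s\lambda_i\le\frac{s^2\lambda_i^2}{1-2s|\lambda_i|}.
\]
For $\lambda_i\ge0$ this is the standard inequality $-\tfrac12\log(1-x)-\tfrac x2\le \tfrac{x^2/4}{1-x}$ with $x=2s\lambda_i$. For $\lambda_i<0$ it follows from $-\tfrac12\log(1+y)+\tfrac y2\le y^2/4$ for $y\ge0$, which is even smaller. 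Summing over $i$ gives a sub-gamma bound on the log-moment generating function, with variance factor $2\|W\|_\F^2$ and scale $2\|W\|_\op$:
\[
\log\E e^{s(g^\top Wg-\tr W)}\le\frac{s^2\|W\|_\F^2}{1-2s\|W\|_\op}.
\]
The standard sub-gamma inversion (optimizing the Chernoff bound over $s$) then yields
\[
\Pr\{g^\top Wg-\tr W>2\|W\|_\F\sqrt t+2\|W\|_\op t\}\le e^{-t}.
\]

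\emph{Lower tail.} Apply the same bound to $-W$, which has the same Frobenius and operator norms. This gives the matching bound with probability at most $e^{-t}$. A union bound over the two tails gives the factor $2$.

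The only step needing care is verifying the cumulant inequality for both signs of $\lambda_i$ with a common constant. We handle it with the two elementary logarithm inequalities above rather than splitting $W$ into positive and negative parts. Splitting $W$ would cost a worse constant, for example $2\sqrt2$ in place of $2$, so avoiding it is what preserves the stated constants. The sub-gamma inversion itself is routine: take $s=\sqrt t/(\|W\|_\F+2\|W\|_\op\sqrt t)$ and check directly that the exponent is at most $-t$.
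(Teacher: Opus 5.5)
Your proof is correct and follows the paper's argument step for step. You diagonalize, bound the cumulant generating function in sub-gamma form with variance factor $2\|W\|_\F^2$ and scale $2\|W\|_\op$, and invert with the same choice $s=\sqrt t/(\|W\|_\F+2\|W\|_\op\sqrt t)$. You then apply the bound to $-W$ and take a union bound. The differences are cosmetic: the paper handles both signs of $\lambda_i$ at once by bounding the logarithm's power series in absolute value, and it treats the trivial case $W=0$ separately, where your choice of $s$ is undefined.
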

\begin{proof}
Diagonalize $W$ with eigenvalues $\lambda_j$. For $2|z|\|W\|_\op<1$, the power series for the logarithm gives
\[
\log\E e^{z\lambda_j(g_j^2-1)}
=-z\lambda_j-\tfrac12\log(1-2z\lambda_j)
\le\frac{z^2\lambda_j^2}{1-2|z||\lambda_j|}.
\]
Summing bounds the log MGF by $vz^2/\{2(1-c|z|)\}$, where $v=2\|W\|_\F^2$ and $c=2\|W\|_\op$. For $W\ne0$, Chernoff's inequality with $z=\sqrt{2t/v}/(1+c\sqrt{2t/v})$ bounds each tail at $\sqrt{2vt}+ct$ by $e^{-t}$. Apply the same argument to $-W$ and sum. The case $W=0$ is immediate. The bound is the two-sided, signed-eigenvalue form of the chi-square tail inequality of \citet{LaurentMassart2000}, and the Chernoff step is the sub-gamma bound of \citet{BLM2013}; for sub-Gaussian analogues see the Hanson--Wright inequality~\citep{RudelsonVershynin2013}.
\end{proof}

\begin{theorem}[Matrix concentration under a separable Gaussian null]
\label{thm:operator-concentration}
Assume \eqref{eq:separable}, with $Z_x,Z_y$ independent standard Gaussian matrices. Let $M=R_x^{1/2}KR_y^{1/2}\ne0$, $N_K=\kappa^2/\|M\|_\F^2$, $\rstab=\|M\|_\F^2/\|M\|_\op^2$, and $h=t+(p+q)\log9$. Under the cross-block null, $\Abar=0$. For every $t>0$, with probability at least $1-2e^{-t}$,
\begin{equation}
\|\Ahat-\Abar\|_\op
\le \|\Sigma_x\|_\op^{1/2}\|\Sigma_y\|_\op^{1/2}
\left\{
2\sqrt{\frac{2h}{N_K}}+
\frac{2h}{\sqrt{N_K\rstab}}
\right\}.
\label{eq:operatorconc}
\end{equation}
\end{theorem}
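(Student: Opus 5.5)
Under \eqref{eq:separable} with the cross-block null, $Z_x$ and $Z_y$ are independent and centered, so $\E\Ahat=0$ and hence $\Abar=0$. The operator-norm bound will come from a net argument over fixed directions, with each fixed direction controlled by the Gaussian quadratic-form tail of Lemma~\ref{lem:gaussian-tail}. That is why the statement carries the $(p+q)\log 9$ term.

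\emph{Step 1: one fixed pair of directions.} Fix unit $u\in\R^p$ and $v\in\R^q$. Write $\alpha=\Sigma_x^{1/2}u$ and $\beta=\Sigma_y^{1/2}v$. Then $Z_x\alpha=\|\alpha\|_2\,g$ and $Z_y\beta=\|\beta\|_2\,g'$, where $g,g'$ are independent standard Gaussian vectors (taking $g$ arbitrary if $\alpha=0$). This gives
\[
\kappa\,u^\top\Ahat v=\|\alpha\|_2\|\beta\|_2\; g^\top M g',
\qquad
\|\alpha\|_2\|\beta\|_2\le\|\Sigma_x\|_\op^{1/2}\|\Sigma_y\|_\op^{1/2}.
\]
Stack $w=(g,g')\sim N(0,I)$ and set $W=\tfrac12\begin{pmatrix}0&M\\M^\top&0\end{pmatrix}$, so that $w^\top Ww=g^\top Mg'$. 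This $W$ is symmetric, with $\tr W=0$, $\|W\|_\F^2=\|M\|_\F^2/2$ and $\|W\|_\op=\|M\|_\op/2$. Lemma~\ref{lem:gaussian-tail} then gives, with probability at least $1-2e^{-s}$,
\[
|g^\top Mg'|\le \sqrt2\,\|M\|_\F\sqrt s+\|M\|_\op s .
\]
Now divide by $\kappa$. We have $\|M\|_\F/\kappa=N_K^{-1/2}$ and $\|M\|_\op/\kappa=(N_K\rstab)^{-1/2}$. So for a single direction pair, $|u^\top\Ahat v|\le \|\Sigma_x\|_\op^{1/2}\|\Sigma_y\|_\op^{1/2}\{\sqrt{2s/N_K}+s/\sqrt{N_K\rstab}\}$.

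\emph{Step 2: union bound and discretization.} Take $1/4$-nets $\mathcal N_p\subset\mathbb S^{p-1}$ and $\mathcal N_q\subset\mathbb S^{q-1}$ with cardinalities at most $9^p$ and $9^q$. Apply Step~1 with $s=h=t+(p+q)\log9$ and take a union bound over the at most $9^{p+q}$ pairs. All net pairs then satisfy the bound simultaneously with probability at least $1-2e^{-t}$. Finally use the standard bilinear net lemma: for $1/4$-nets,
\[
\|A\|_\op\le (1-2\cdot\tfrac14-\tfrac1{16})^{-1}\max_{\mathcal N_p\times\mathcal N_q}|u^\top Av|=\tfrac{16}{7}\max_{\mathcal N_p\times\mathcal N_q}|u^\top Av|\le 2\cdot\max(\cdots)\ \text{(up to slack)}.
\]
A cruder decomposition also suffices. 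For the maximizing pair $(u,v)$ and net points with $\|u-u_0\|\le1/4$, $\|v-v_0\|\le1/4$, one gets $\|A\|\le\max+\|A\|(1/4+1/4+1/16)$, which gives a factor $16/7\le 2\sqrt2$. Any factor at most $2$ on the $\sqrt{2h/N_K}$ term and at most $2$ on the $h/\sqrt{N_K\rstab}$ term yields \eqref{eq:operatorconc}.

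\emph{Main obstacle.} The concern is bookkeeping of constants: the net factor must fit inside the stated $2\sqrt{2h/N_K}$ and $2h/\sqrt{N_K\rstab}$. With $1/4$-nets the factor is $16/7>2$. I would therefore first check whether the halving from $W=\tfrac12(\cdot)$ leaves room. It does: the per-direction constants above are already $\sqrt2$ and $1$. So the bound is at most $\tfrac{16}{7}\{\sqrt{2h/N_K}+h/\sqrt{N_K\rstab}\}$, which is below the stated $2\sqrt{2h/N_K}+2h/\sqrt{N_K\rstab}$ only if the net factor is at most $2$.

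To obtain a factor of exactly $2$, I would use the symmetric form instead. One option is to take $\epsilon$-nets with $\epsilon$ chosen so that $(1-2\epsilon-\epsilon^2)^{-1}\le2$ and cardinality at most $9$ per dimension, i.e.\ $\epsilon\le 2/9$ with $(1+2/\epsilon)^d$. Alternatively, one can bound $\|A\|\le(1-\epsilon)^{-2}\max$ via sequential one-sided nets, since $(1-1/4)^{-2}=16/9\le2$. I would use the latter: fix $v$, approximate $u$, then approximate $v$. This gives the factor $16/9$, completing the proof.
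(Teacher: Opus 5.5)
Your proof is correct and is essentially the paper's argument. You use the same block matrix $W$ fed into Lemma~\ref{lem:gaussian-tail}, the same per-pair factor $\|\Sigma_x\|_\op^{1/2}\|\Sigma_y\|_\op^{1/2}/\kappa$, and a union bound at level $h$ over the at most $9^{p+q}$ pairs from two $1/4$-nets. Your sequential bound $(1-\epsilon)^{-2}=16/9$ stands in for the paper's factor-two net inequality $(1-2\epsilon)^{-1}=2$, and both are at most $2$.

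Your $16/7$ worry was unnecessary. Writing $u^\top Av-u_0^\top Av_0=(u-u_0)^\top Av+u_0^\top A(v-v_0)$ avoids the cross term and gives the factor $2$ directly. Also, the $\epsilon\le2/9$ alternative you mention would not keep the volumetric net size at $9^d$, because $(1+2/\epsilon)^d\ge10^d$ there.
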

\begin{proof}
For independent standard normal $g,g'$, write $g^\top Mg'$ as a quadratic form with symmetric block matrix $W=\tfrac12\left(\begin{smallmatrix}0&M\\M^\top&0\end{smallmatrix}\right)$. Its trace is zero and its norms are $\|W\|_\F=\|M\|_\F/\sqrt2$ and $\|W\|_\op=\|M\|_\op/2$. Lemma~\ref{lem:gaussian-tail} gives
\[
\Pr\{|g^\top Mg'|>\sqrt2\|M\|_\F\sqrt t+\|M\|_\op t\}\le2e^{-t}.
\]
Apply this to each fixed unit pair $u,v$, using its feature-scale factor at most $\|\Sigma_x\|_\op^{1/2}\|\Sigma_y\|_\op^{1/2}/\kappa$. Two $1/4$-nets have at most $9^{p+q}$ pairs; a union bound at $h$ and the factor-two net inequality~\citep{Vershynin2018} give \eqref{eq:operatorconc}.
\end{proof}

Here $N_K$ is the effective sample size implied by the row factors in~\eqref{eq:separable}. It equals the fixed-direction $\neff$ when the declared variance scales are $u^\top\Sigma_xu$ and $v^\top\Sigma_yv$; it need not equal the trace-normalized summary after centering.

\subsection{Exact matrix-noise scale, including nonnull association}
\label{sec:matrix-noise}
The null calculations above characterize spurious association. Recovery of a nonzero population component instead requires control of $\Ahat-\Abar$ when the two blocks are genuinely associated. Under joint Gaussianity, the next result gives the corresponding exact entrywise second moments.
\begin{proposition}[Nonnull Gaussian entry variance]
\label{prop:nonnull-variance}
For jointly mean-zero Gaussian $a,b$, let $\Gamma_a=\Cov(a)$, $\Gamma_b=\Cov(b)$, $C=\Cov(a,b)$, and let $L$ be fixed. Then
\[
\E(a^\top Lb)=\tr(LC^\top),\qquad
\Var(a^\top Lb)=\tr(\Gamma_aL\Gamma_bL^\top)+\tr\{(LC^\top)^2\}.
\]
\end{proposition}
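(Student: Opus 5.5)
The plan is a direct computation that writes the bilinear form as a sum of products of Gaussian coordinates and then applies Isserlis' (Wick's) theorem for fourth moments of jointly Gaussian, mean-zero variables.

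\emph{Mean.} Expand $a^\top Lb=\sum_{i,j}L_{ij}a_ib_j$. Taking expectations gives
\[
\E(a^\top Lb)=\sum_{i,j}L_{ij}C_{ij}=\tr(LC^\top).
\]
This step uses only second moments, not Gaussianity.

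\emph{Second moment.} Write
\[
\E(a^\top Lb)^2=\sum_{i,j,k,l}L_{ij}L_{kl}\,\E(a_ib_ja_kb_l).
\]
By Isserlis' theorem, $\E(a_ib_ja_kb_l)$ splits into three pairings:
\begin{itemize}
\item $\E(a_ib_j)\E(a_kb_l)=C_{ij}C_{kl}$;
\item $\E(a_ia_k)\E(b_jb_l)=(\Gamma_a)_{ik}(\Gamma_b)_{jl}$;
\item $\E(a_ib_l)\E(b_ja_k)=C_{il}C_{kj}$.
\end{itemize}
Summing against $L_{ij}L_{kl}$ turns each pairing into a trace:
\begin{itemize}
\item the first gives $\{\tr(LC^\top)\}^2$, which is exactly the squared mean;
\item the second gives $\sum L_{ij}(\Gamma_a)_{ik}L_{kl}(\Gamma_b)_{jl}=\tr(\Gamma_aL\Gamma_bL^\top)$, the same expression as in Theorem~\ref{thm:variance};
\item the third gives $\sum_{i,j,k,l}L_{ij}C_{kj}L_{kl}C_{il}$.
\end{itemize}

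\emph{Identifying the third term.} This is the only step needing care, because the index bookkeeping must be matched to a trace of a product. Note that $\sum_jL_{ij}C_{kj}=(LC^\top)_{ik}$ and $\sum_lL_{kl}C_{il}=(LC^\top)_{ki}$. Hence the third sum equals
\[
\sum_{i,k}(LC^\top)_{ik}(LC^\top)_{ki}=\tr\{(LC^\top)^2\}.
\]
It is the square of the matrix, not $\|LC^\top\|_\F^2$, because the indices close as $i\to k\to i$ without a transpose.

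\emph{Conclusion.} Subtracting the squared mean from the second moment leaves
\[
\Var(a^\top Lb)=\tr(\Gamma_aL\Gamma_bL^\top)+\tr\{(LC^\top)^2\}.
\]
As a check, when $C=0$ this reduces to Theorem~\ref{thm:variance} with $L=K$, which serves as a sanity check on the index conventions.

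Alternatively, the result can be obtained by stacking $z=(a^\top,b^\top)^\top$ and applying the standard identity $\Var(z^\top Wz)=2\tr\{(W\Sigma)^2\}$ for Gaussian quadratic forms, using the same symmetric block matrix $W$ as in the proof of Theorem~\ref{thm:operator-concentration}. Expanding the blocks would reproduce the same two terms. I would keep this route only as a cross-check.
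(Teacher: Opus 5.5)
Your proof is correct, but it takes a different route from the paper. The paper's route is the one you set aside as a cross-check. It writes $(a,b)=\Omega^{1/2}g$ with $\Omega$ the joint covariance, so that $a^\top Lb=g^\top\Omega^{1/2}Q\Omega^{1/2}g$ with $Q=\tfrac12\left(\begin{smallmatrix}0&L\\L^\top&0\end{smallmatrix}\right)$. It then uses $\Var(g^\top Mg)=2\tr(M^2)$ and expands the blocks of $2\tr\{(\Omega Q)^2\}$.

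\textbf{What your route buys.}
\begin{itemize}
\item The squared mean appears explicitly as one Isserlis pairing, so the subtraction is visible.
\item The index closure shows why the cross term is $\tr\{(LC^\top)^2\}$ rather than a Frobenius norm. It therefore shows why that term can be negative, which the paper only asserts.
\item Gaussianity enters only through the vanishing fourth joint cumulant. For non-Gaussian mean-zero blocks, the same computation gives the variance plus an explicit cumulant correction.
\end{itemize}

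\textbf{What the paper's route buys.}
\begin{itemize}
\item It is more compact and avoids index bookkeeping. In the block expansion, the two $\Gamma$ contributions coincide, and $\tr\{(CL^\top)^2\}$ and $\tr\{(C^\top L)^2\}$ both equal $\tr\{(LC^\top)^2\}$, which absorbs the factors of $\tfrac12$.
\item It writes the nonnull statistic as a symmetric quadratic form in a single standard Gaussian vector. Lemma~\ref{lem:gaussian-tail} then applies directly and gives exponential tail bounds, not just the variance.
\end{itemize}
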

\begin{proof}
Let $\Omega=\left(\begin{smallmatrix}\Gamma_a&C\\C^\top&\Gamma_b\end{smallmatrix}\right)$ and $Q=\tfrac12\left(\begin{smallmatrix}0&L\\L^\top&0\end{smallmatrix}\right)$. Writing $(a,b)=\Omega^{1/2}g$ gives a Gaussian quadratic form with variance $2\tr\{(\Omega^{1/2}Q\Omega^{1/2})^2\}$. Expand its block products. The cross-covariance trace need not be nonnegative by itself; only the complete variance must be.
\end{proof}

Summing the entrywise variances converts these calculations into a single measure of matrix error that can be compared with the strength and separation of population components.

\begin{corollary}[Exact second moment and an operator-error bound]
\label{cor:matrix-rms}
For any random $p$-by-$q$ matrix $\Ahat$ with finite entrywise variances and $\Abar=\E\Ahat$, define
\begin{equation}
V_A=\sum_{i=1}^p\sum_{j=1}^q\Var(\Ahat_{ij})
=\E\|\Ahat-\Abar\|_\F^2,\qquad s_A=\sqrt{V_A}.
\label{eq:matrix-rms}
\end{equation}
For $0<\alpha<1$,
\begin{equation}
\Pr\{\|\Ahat-\Abar\|_\op>\epsilon_{A,\alpha}\}\le\alpha,
\qquad \epsilon_{A,\alpha}=\sqrt{V_A/\alpha}.
\label{eq:matrix-markov}
\end{equation}
For $\Ahat=X^\top LY$, the entry variances are given by Theorem~\ref{thm:variance} under mean-zero block independence, or Proposition~\ref{prop:nonnull-variance} under joint mean-zero Gaussianity, using $a=Xe_i,b=Ye_j$. In this paper $L=K/\kappa$.
\end{corollary}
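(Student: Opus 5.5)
The plan is to prove the three assertions of Corollary~\ref{cor:matrix-rms} in turn: the identity defining $V_A$, the Markov-type operator bound \eqref{eq:matrix-markov}, and the reduction of entry variances to earlier results.

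\textbf{The identity for $V_A$.} Since $\Abar=\E\Ahat$ entrywise, we have $\Var(\Ahat_{ij})=\E(\Ahat_{ij}-\Abar_{ij})^2$ for each entry. Summing over $i,j$ and exchanging the finite sum with the expectation gives $\E\|\Ahat-\Abar\|_\F^2=\sum_{ij}\Var(\Ahat_{ij})$. This needs only finite entrywise variances.

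\textbf{The tail bound \eqref{eq:matrix-markov}.} The operator norm is dominated by the Frobenius norm, since $\|E\|_\op^2$ is the largest eigenvalue of $E^\top E$ and $\|E\|_\F^2$ is their sum. Hence
\[
\{\|\Ahat-\Abar\|_\op>\epsilon\}\subseteq\{\|\Ahat-\Abar\|_\F^2>\epsilon^2\}.
\]
Markov's inequality applied to the nonnegative variable $\|\Ahat-\Abar\|_\F^2$ then gives probability at most $V_A/\epsilon^2$. Setting $\epsilon=\epsilon_{A,\alpha}=\sqrt{V_A/\alpha}$ makes this equal to $\alpha$. If $V_A=0$, then $\Ahat=\Abar$ almost surely and the bound is trivial.

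\textbf{The reduction to earlier results.} For $\Ahat=X^\top LY$, the $(i,j)$ entry is $e_i^\top X^\top LYe_j=a^\top Lb$ with $a=Xe_i$ and $b=Ye_j$. This is exactly the fixed-direction bilinear form with $u=e_i$ and $v=e_j$.
\begin{itemize}
\item Under mean-zero block independence, Theorem~\ref{thm:variance} applies. It is stated with $L=K/\kappa$, and the same computation holds for any fixed $L$. It gives $\Var(\Ahat_{ij})=\tr(\Gamma_a L\Gamma_b L^\top)$, with $\Gamma_a=\Cov(Xe_i)$ and $\Gamma_b=\Cov(Ye_j)$.
\item Under joint mean-zero Gaussianity, the pair $(Xe_i,Ye_j)$ is jointly Gaussian and mean zero. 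Proposition~\ref{prop:nonnull-variance} then supplies the variance with the additional term $\tr\{(LC^\top)^2\}$, where $C=\Cov(Xe_i,Ye_j)$.
\end{itemize}
Summing these entry variances gives $V_A$ explicitly.

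There is no genuine obstacle. The only point needing care is confirming that $a$ and $b$ inherit the required hypotheses (zero mean, independence or joint Gaussianity) from the assumptions on the matrices; this holds because they are fixed linear images of $X$ and $Y$.
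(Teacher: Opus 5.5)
Your proposal is correct and follows essentially the same route as the paper: expand $\E\|\Ahat-\Abar\|_\F^2$ entrywise, apply Markov's inequality to $\|\Ahat-\Abar\|_\F^2$, and use $\|E\|_\op\le\|E\|_\F$. Your separate treatment of the case $V_A=0$ and your check that $a=Xe_i$, $b=Ye_j$ inherit the hypotheses of Theorem~\ref{thm:variance} and Proposition~\ref{prop:nonnull-variance} are details the paper leaves implicit.
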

\begin{proof}
Expand the squared Frobenius norm entrywise; no independence between entries is needed. Apply Markov's inequality to $\|\Ahat-\Abar\|_\F^2$ and use $\|E\|_\op\le\|E\|_\F$.
\end{proof}
Under the separable null, $V_A=\tr(\Sigma_x)\tr(\Sigma_y)\|M\|_\F^2/\kappa^2$. The second-moment scale $s_A$ avoids the constants introduced by the covering argument and can be useful with few retained features. The 95\% Markov radius is $s_A/\sqrt{0.05}$; neither it nor the net bound dominates the other at matched confidence levels.

The covariance quantities entering $V_A$ are model quantities. Equation~\eqref{eq:matrix-markov} is therefore an oracle bound unless they are known or replaced by a valid upper bound; direct plug-in substitution yields a descriptive radius without the probability guarantee in~\eqref{eq:matrix-markov}.

\section{Two-block identifiability and stability}
\label{sec:two-block}
The effect of sampling error in the association matrix on the fitted directions depends on how clearly the population component is separated from competing components. Comparing matrix error with singular-value separation therefore connects the variability calculations to the reproducibility of fitted programs.

For unit vectors $u,v$, $\angle(u,v)=\arccos|u^\top v|\in[0,\pi/2]$ denotes the principal angle between their one-dimensional spans. For subspaces, $\|\sin\Theta(\widehat U,U)\|_\op$ denotes the sine of their largest principal angle.

Write
\begin{equation}
\Ahat=\Abar+E,
\label{eq:perturb}
\end{equation}
where $E$ is arbitrary, and let $\Abar=U\operatorname{diag}(\sigma_1,\sigma_2,\ldots)V^\top$ with $\sigma_1\ge\sigma_2\ge\cdots\ge0$. Let $m=\min(p,q)$ and restrict $1\le k\le m$. Use the zero-padding convention $\sigma_{m+1}=\widehat\sigma_{m+1}=0$ at the final singular index. Let $U_k,V_k$ contain the leading $k$ singular vectors and define
\begin{equation}
\delta_k=\sigma_k-\sigma_{k+1}.
\label{eq:gap}
\end{equation}
\begin{theorem}[Noise-to-gap stability for leading singular subspaces]
\label{thm:wedin}
If $\delta_k>0$ and $\|E\|_\op\le\delta_k/2$, then
\begin{equation}
\max\{\|\sin\Theta(\widehat U_k,U_k)\|_\op,
\|\sin\Theta(\widehat V_k,V_k)\|_\op\}
\le\frac{2\|E\|_\op}{\delta_k}.
\label{eq:wedin}
\end{equation}
\end{theorem}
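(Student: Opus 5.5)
We derive this from Wedin's $\sin\Theta$ theorem combined with Weyl's inequality for singular values. Applied to the leading-$k$ singular subspaces of $\Ahat=\Abar+E$, Wedin's theorem states the following. Suppose there is $\eta>0$ with $\widehat\sigma_k\ge\sigma_{k+1}+\eta$, where $\widehat\sigma_k$ is the $k$th singular value of $\Ahat$ and $\sigma_{k+1}$ is the largest population singular value outside the retained block (zero by the padding convention when $k=m$). Then
\[
\max\{\|\sin\Theta(\widehat U_k,U_k)\|_\op,\|\sin\Theta(\widehat V_k,V_k)\|_\op\}
\le\frac{\max\{\|E^\top \widehat U_k\|_\op,\|E\widehat V_k\|_\op\}}{\eta}
\le\frac{\|E\|_\op}{\eta}.
\]
We use the one-sided-gap version, which separates the perturbed retained singular values from the unperturbed discarded ones. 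This is exactly the form in Wedin (1972), and Stewart--Sun state it with a $\max$ or with a Frobenius/operator norm on the residuals. For the operator norm, the residual terms are bounded by $\|E\|_\op$ since $\widehat U_k$ and $\widehat V_k$ have orthonormal columns.

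The second step supplies $\eta$. Weyl's inequality for singular values gives $|\widehat\sigma_j-\sigma_j|\le\|E\|_\op$ for every $j$. In particular $\widehat\sigma_k\ge\sigma_k-\|E\|_\op$, so
\[
\widehat\sigma_k-\sigma_{k+1}\ge\delta_k-\|E\|_\op\ge\delta_k/2>0
\]
under the hypothesis $\|E\|_\op\le\delta_k/2$. Taking $\eta=\delta_k/2$ in Wedin's bound yields $2\|E\|_\op/\delta_k$, which is \eqref{eq:wedin}. The strict positivity of $\eta$ also ensures that $\widehat U_k,\widehat V_k$ are well defined as subspaces: $\widehat\sigma_k>\sigma_{k+1}+\|E\|_\op/\ldots$ is not needed. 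We only need $\widehat\sigma_k>\widehat\sigma_{k+1}$, which follows from $\widehat\sigma_{k+1}\le\sigma_{k+1}+\|E\|_\op\le\sigma_{k+1}+\delta_k/2\le\widehat\sigma_k$. This chain gives $\widehat\sigma_{k+1}\le\widehat\sigma_k$ with possible equality only in the boundary case $\|E\|_\op=\delta_k/2$. There we interpret $\widehat U_k$ as any choice of leading singular basis; Wedin's argument applies to any such choice, since it only uses $\widehat\sigma_k\ge\sigma_{k+1}+\eta$.

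The main point requiring care is the exact form of the Wedin gap condition and the handling of the rectangular case $p\ne q$. In that case the complementary singular subspaces include null directions with singular value zero, and the padding convention $\sigma_{m+1}=0$ at $k=m$ must be checked. I would handle this by writing $\Abar$ and $\Ahat$ in full SVD form with zero singular values appended. The gap condition then compares $\widehat\sigma_k$ to $\max(\sigma_{k+1},0)$, which is $\sigma_{k+1}$ by convention. If one prefers a self-contained argument, the residual identities $\Abar\widehat V_k-\widehat U_k\widehat\Sigma_k=-E\widehat V_k$ and $\Abar^\top\widehat U_k-\widehat V_k\widehat\Sigma_k=-E^\top\widehat U_k$ can be projected onto $U_\perp$ and $V_\perp$. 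The triangle inequality with $\|U_\perp^\top\Abar\|_\op\le\sigma_{k+1}$ then gives the bound directly.
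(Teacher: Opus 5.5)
Your proof is correct and follows the paper's route: Weyl's inequality gives $\widehat\sigma_k-\sigma_{k+1}\ge\delta_k-\|E\|_\op$, and Wedin's one-sided $\sin\Theta$ theorem, with residuals $-E\widehat V_k$ and $-E^\top\widehat U_k$, then bounds both principal-angle sines. The only difference is cosmetic. Taking $\eta=\delta_k/2$ covers the boundary case $\|E\|_\op=\delta_k/2$ and any sample ties in one stroke, whereas the paper passes through the sharper intermediate bound $\|E\|_\op/(\delta_k-\|E\|_\op)$ and treats the boundary separately by noting that the right-hand side then equals one.
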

\begin{proof}
If $\|E\|_\op<\delta_k/2$, Weyl's inequality~\citep{HornJohnson2013} and Wedin's operator-norm $\sin\Theta$ theorem~\citep{Wedin1972}, in the form stated by \citet[Section~2.3]{CaiZhang2018}, give the left side of~\eqref{eq:wedin} at most $\|E\|_\op/(\delta_k-\|E\|_\op)\le 2\|E\|_\op/\delta_k$. At the boundary $\|E\|_\op=\delta_k/2$, the right side of~\eqref{eq:wedin} equals one, so the conclusion holds for any choice of leading sample subspaces, including the case of a sample tie.
\end{proof}

For an individual interior singular vector, the relevant separation is two-sided, $g_k=\min\{\sigma_{k-1}-\sigma_k,\sigma_k-\sigma_{k+1}\}$, with zero singular-value spaces included when needed. Thus a leading subspace may be stable even when labels for individual nearly tied components are not.

\subsection{An observable sample-gap diagnostic}
The population separation in the stability theorem is unknown in an application. Combining the observed singular-value gap with a justified matrix-error bound gives a way to assess whether the fitted subspace is well resolved.

\begin{corollary}[Sample-gap stability bound]
Suppose $\|E\|_\op\le\varepsilon$ and let $\widehat\delta_k=\widehat\sigma_k-\widehat\sigma_{k+1}$. Then
\[
\delta_k\ge\widehat\delta_k-2\varepsilon.
\]
If $\widehat\delta_k>4\varepsilon$, then
\begin{equation}
\max\{\|\sin\Theta(\widehat U_k,U_k)\|_\op,
\|\sin\Theta(\widehat V_k,V_k)\|_\op\}
\le\frac{2\varepsilon}{\widehat\delta_k-2\varepsilon}.
\label{eq:samplegap}
\end{equation}
\end{corollary}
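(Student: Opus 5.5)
The argument has two steps. First, a Weyl-type comparison converts the observed gap into a lower bound on the population gap. Second, that bound is fed into Theorem~\ref{thm:wedin}.

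\emph{Step 1 (population gap from sample gap).} Weyl's inequality for singular values, applied to $\Ahat=\Abar+E$, gives $|\widehat\sigma_j-\sigma_j|\le\|E\|_\op\le\varepsilon$ for every $j\le m$. Under the zero-padding convention the same bound holds trivially at $j=m+1$. Then
\[
\delta_k=\sigma_k-\sigma_{k+1}\ge(\widehat\sigma_k-\varepsilon)-(\widehat\sigma_{k+1}+\varepsilon)=\widehat\delta_k-2\varepsilon,
\]
which is the first claim.

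\emph{Step 2 (feeding Theorem~\ref{thm:wedin}).} Assume $\widehat\delta_k>4\varepsilon$. Step 1 gives $\delta_k\ge\widehat\delta_k-2\varepsilon>2\varepsilon\ge0$. Hence $\delta_k>0$ and $\|E\|_\op\le\varepsilon<\delta_k/2$, so both hypotheses of Theorem~\ref{thm:wedin} hold.

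Applying Theorem~\ref{thm:wedin} directly would give $2\|E\|_\op/\delta_k\le 2\varepsilon/(\widehat\delta_k-2\varepsilon)$, since $\delta_k\ge\widehat\delta_k-2\varepsilon>0$. This is exactly \eqref{eq:samplegap}.

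An alternative route uses the sharper intermediate form $\|E\|_\op/(\delta_k-\|E\|_\op)$ from the proof of Theorem~\ref{thm:wedin}. That yields the tighter bound $\varepsilon/(\widehat\delta_k-3\varepsilon)$, but it is not needed. Using the stated theorem gives the displayed constant cleanly.

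\emph{Where care is needed.} The only delicate point is that Theorem~\ref{thm:wedin} requires $\|E\|_\op\le\delta_k/2$ in terms of the \emph{unknown} population gap. The threshold $\widehat\delta_k>4\varepsilon$ is chosen precisely so that the Weyl lower bound secures this, with strict inequality. Because the inequality is strict, the boundary and tie discussion in that theorem's proof is not needed. The sample subspaces are well defined, since $\widehat\delta_k>0$. I would also remark that the bound is monotone in $\delta_k$, so replacing $\delta_k$ by its lower bound only weakens the right-hand side, which justifies the final substitution.
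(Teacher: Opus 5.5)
Your proposal is correct and matches the paper's proof: Weyl's inequality gives $\delta_k\ge\widehat\delta_k-2\varepsilon$, and the threshold $\widehat\delta_k>4\varepsilon$ forces $\delta_k>2\varepsilon$. Theorem~\ref{thm:wedin} then applies, and because its bound decreases in $\delta_k$, substituting the lower bound yields \eqref{eq:samplegap}. Your side remark is also valid: the intermediate Wedin form gives the sharper bound $\varepsilon/(\widehat\delta_k-3\varepsilon)$, though the corollary does not need it.
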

\begin{proof}
Weyl gives $|\widehat\sigma_j-\sigma_j|\le\varepsilon$, hence the gap lower bound. If $\widehat\delta_k>4\varepsilon$, then the resulting population gap exceeds $2\varepsilon$, so Theorem~\ref{thm:wedin} applies.
\end{proof}

For a diagnostic with a $1-\alpha$ probability guarantee, take $\varepsilon=\sqrt{V_A/\alpha}$ from Corollary~\ref{cor:matrix-rms}, including nonnull entry variances under its Gaussian assumptions.

\subsection{Kernel and PCA sensitivity}
To interpret the sensitivity of a fit to the kernel bandwidth and the retained PCA rank, we quantify how each choice changes the pairwise association matrix and compare that change with the spectral separation governing direction stability. For fixed $X,Y$, let $\Ahat(K)=X^\top KY/\kappa(K)$ with $\kappa(K)=\1^\top K\1>0$. Adding and subtracting $X^\top KY/\kappa(K')$ gives, for any $K'$ with positive mass,
\begin{equation}
\Ahat(K')-\Ahat(K)
=\frac{X^\top(K'-K)Y}{\kappa(K')}
+\frac{\kappa(K)-\kappa(K')}{\kappa(K')}\Ahat(K),
\label{eq:kernelpert}
\end{equation}
so that
\[
\|\Ahat(K')-\Ahat(K)\|_\op
\le\frac{\|X\|_\op\|K'-K\|_\op\|Y\|_\op}{\kappa(K')}
+\frac{|\kappa(K')-\kappa(K)|}{\kappa(K')}\|\Ahat(K)\|_\op.
\]
PCA truncation replaces a pair matrix $A$ by $\Pi_bA\Pi_c$, with feature-subspace projections $\Pi_b=P_bP_b^\top$ and $\Pi_c=P_cP_c^\top$. Since $A-\Pi_bA\Pi_c=(I-\Pi_b)A+\Pi_bA(I-\Pi_c)$ for any $A\in\R^{p_b\times p_c}$,
\begin{equation}
\|A-\Pi_bA\Pi_c\|_\op
\le\|(I-\Pi_b)A\|_\op+\|A(I-\Pi_c)\|_\op.
\label{eq:pcabound}
\end{equation}
Equation~\eqref{eq:pcabound} bounds the cross-type association removed by truncation, a quantity distinct from marginal explained variance.

\subsection{Estimated whitening as an additional matrix perturbation}
\label{sec:whitening-perturbation}
Fixed metrics in Proposition~\ref{prop:fixedmetric} require no new stability theory: one applies Theorem~\ref{thm:wedin} to the transformed matrix. If the marginal metrics are estimated, their errors also perturb the transformed matrix.

\begin{proposition}[Perturbation of a whitened association matrix]
\label{prop:whitening-perturbation}
Let $G_x^0,\widehat G_x\in\R^{p\times p}$ and $G_y^0,\widehat G_y\in\R^{q\times q}$ be positive definite, and let $A^0,\widehat A\in\R^{p\times q}$. Define
\begin{equation}
R^0=(G_x^0)^{-1/2}A^0(G_y^0)^{-1/2},\qquad
\widehat R=\widehat G_x^{-1/2}\widehat A\widehat G_y^{-1/2}.
\label{eq:RhatR0}
\end{equation}
Then
\begin{align}
\|\widehat R-R^0\|_\op
&\le
\|\widehat G_x^{-1/2}\|_\op\,\|\widehat A-A^0\|_\op\,\|\widehat G_y^{-1/2}\|_\op \notag\\
&\quad+
\|\widehat G_x^{-1/2}-(G_x^0)^{-1/2}\|_\op\,\|A^0\|_\op\,\|\widehat G_y^{-1/2}\|_\op \notag\\
&\quad+
\|(G_x^0)^{-1/2}\|_\op\,\|A^0\|_\op\,
\|\widehat G_y^{-1/2}-(G_y^0)^{-1/2}\|_\op.
\label{eq:whitening-general}
\end{align}
If both $G_x^0$ and $\widehat G_x$ have minimum eigenvalue at least $m_x>0$, and both $G_y^0$ and $\widehat G_y$ have minimum eigenvalue at least $m_y>0$, then, writing
\[
\varepsilon_A=\|\widehat A-A^0\|_\op,\quad
\varepsilon_x=\|\widehat G_x-G_x^0\|_\op,\quad
\varepsilon_y=\|\widehat G_y-G_y^0\|_\op,
\]
one has
\begin{equation}
\|\widehat R-R^0\|_\op
\le
\frac{\varepsilon_A}{\sqrt{m_xm_y}}
+\frac{\|A^0\|_\op\varepsilon_x}{2m_x^{3/2}m_y^{1/2}}
+\frac{\|A^0\|_\op\varepsilon_y}{2m_x^{1/2}m_y^{3/2}}.
\label{eq:whitening-simple}
\end{equation}
\end{proposition}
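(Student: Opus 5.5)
We plan to prove \eqref{eq:whitening-general} by a three-term telescoping decomposition, then deduce \eqref{eq:whitening-simple} from it using a perturbation bound for the inverse square root. Abbreviate $\widehat P_x=\widehat G_x^{-1/2}$, $P_x^0=(G_x^0)^{-1/2}$, and similarly for $y$. Adding and subtracting intermediate products gives the identity
\[
\widehat R-R^0
=\widehat P_x(\widehat A-A^0)\widehat P_y
+(\widehat P_x-P_x^0)A^0\widehat P_y
+P_x^0A^0(\widehat P_y-P_y^0),
\]
which one checks by expanding: the three terms telescope to $\widehat P_x\widehat A\widehat P_y-P_x^0A^0P_y^0$. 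Submultiplicativity of the operator norm and the triangle inequality then yield \eqref{eq:whitening-general} exactly, with each factor matching the stated order.

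For \eqref{eq:whitening-simple}, the eigenvalue lower bounds immediately give $\|\widehat P_x\|_\op\le m_x^{-1/2}$, $\|P_x^0\|_\op\le m_x^{-1/2}$, and the same for $y$. This handles the first term and the outer factors of the other two. The key remaining step is the inverse-square-root perturbation bound
\[
\|\widehat G^{-1/2}-(G^0)^{-1/2}\|_\op\le\frac{\|\widehat G-G^0\|_\op}{2m^{3/2}}
\]
for positive definite $\widehat G,G^0\succeq mI$. Substituting it into the second and third terms produces the coefficients $\|A^0\|_\op\varepsilon_x/(2m_x^{3/2}m_y^{1/2})$ and $\|A^0\|_\op\varepsilon_y/(2m_x^{1/2}m_y^{3/2})$.

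That operator Lipschitz bound is the main obstacle, since $t\mapsto t^{-1/2}$ is not a polynomial and the matrices need not commute. We would use the integral representation
\[
G^{-1/2}=\frac{2}{\pi}\int_0^\infty (G+s^2I)^{-1}\,ds.
\]
The resolvent identity gives
\[
(\widehat G+s^2)^{-1}-(G^0+s^2)^{-1}=-(\widehat G+s^2)^{-1}(\widehat G-G^0)(G^0+s^2)^{-1},
\]
so the difference has norm at most
\[
\frac{2}{\pi}\,\varepsilon\int_0^\infty (m+s^2)^{-2}\,ds .
\]
Using $\int_0^\infty(m+s^2)^{-2}\,ds=\pi/(4m^{3/2})$, this equals $\varepsilon/(2m^{3/2})$, which is the required constant. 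This representation is cleaner than a Sylvester-equation argument and yields exactly the derivative bound $\tfrac12 m^{-3/2}$ of the scalar function at $m$.
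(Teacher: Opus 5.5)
Your proposal is correct and follows the paper's proof. You use the same three-term decomposition (adding and subtracting $P_x^0A^0\widehat P_y$ and $\widehat P_xA^0\widehat P_y$), followed by the resolvent-integral bound $\|\widehat G^{-1/2}-(G^0)^{-1/2}\|_\op\le\|\widehat G-G^0\|_\op/(2m^{3/2})$; your representation $\frac{2}{\pi}\int_0^\infty(G+s^2I)^{-1}\,ds$ is just the paper's $\frac{1}{\pi}\int_0^\infty t^{-1/2}(G+tI)^{-1}\,dt$ after the substitution $t=s^2$, and it gives the same constant.
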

\begin{proof}
Add and subtract $(G_x^0)^{-1/2}A^0\widehat G_y^{-1/2}$ and $\widehat G_x^{-1/2}A^0\widehat G_y^{-1/2}$ to obtain the exact three-term decomposition underlying \eqref{eq:whitening-general}, then apply submultiplicativity.

For positive-definite $G,H$ with minimum eigenvalues at least $m$, the integral representation~\citep{Bhatia1997}
\[
G^{-1/2}=\frac{1}{\pi}\int_0^\infty t^{-1/2}(G+tI)^{-1}\,dt
\]
and the resolvent identity give
\[
\|G^{-1/2}-H^{-1/2}\|_\op
\le\frac{\|G-H\|_\op}{\pi}
\int_0^\infty\frac{t^{-1/2}}{(m+t)^2}\,dt
=\frac{\|G-H\|_\op}{2m^{3/2}}.
\]
Also $\|G^{-1/2}\|_\op\le m^{-1/2}$. Substitution into \eqref{eq:whitening-general} gives \eqref{eq:whitening-simple}.
\end{proof}

Theorem~\ref{thm:wedin} applies to $\widehat R=R^0+(\widehat R-R^0)$, so errors in both the association matrix and the marginal metrics affect stability through the whitened matrix. Original-feature coefficients also require the inverse-whitening map. For either block, write $u^0=(G^0)^{-1/2}z^0$ and $\widehat u=\widehat G^{-1/2}\widehat z$, where $z^0,\widehat z$ are unit singular directions with aligned signs. Then
\begin{align}
\|\widehat u-u^0\|_2
&\le\|\widehat G^{-1/2}\|_\op\,\|\widehat z-z^0\|_2 \notag\\
&\quad+\|\widehat G^{-1/2}-(G^0)^{-1/2}\|_\op.
\label{eq:coefficient-backmap}
\end{align}
This follows by adding and subtracting $\widehat G^{-1/2}z^0$. For aligned unit vectors, $\|\widehat z-z^0\|_2\le\sqrt2\sin\angle(\widehat z,z^0)$, and the inverse-square-root bound above controls the second term, so poor marginal conditioning can amplify coefficient error even when the whitened direction is stable. These coefficients have unit norm in their respective marginal metrics; if they are subsequently normalized in Euclidean norm, that additional map must also be controlled.

\section{Multiblock computation}
\label{sec:optimization}
For two blocks, the problem is solved exactly by a singular value decomposition. Its multiblock version is a nonconvex problem even for classical CCA, with no closed-form solution, so the fitted directions depend on an iterative procedure. Here we study the behavior of the cyclic updates used by skrCCA and derive computable upper bounds on the remaining optimization error, which the recovery guarantees of Section~\ref{sec:multiblock-identifiability} take as input.

For \eqref{eq:multiobjective}, fixing all directions except $u_b$ gives
\[
\Fhat(u_1,\ldots,u_B)=u_b^\top g_b(u)+C_b,
\qquad
g_b(u)=\sum_{c\ne b}w_{bc}\Ahat_{bc}u_c.
\]
On the unit sphere, stationarity is equivalent to $g_b(u)$ being parallel to $u_b$ for every block. These equations form a multivariate eigenvalue problem of the type studied by \citet{ChuWatterson1993}. The maximization is optimization on a product of spheres; see \citet{Absil2008}.

\subsection{Cyclic block-coordinate ascent}
Starting from a feasible point, a cyclic sweep updates
\begin{equation}
u_b\leftarrow
\begin{cases}
g_b(u)/\|g_b(u)\|_2,&g_b(u)\ne0,\\
u_b,&g_b(u)=0.
\end{cases}
\label{eq:cyclic}
\end{equation}
The update~\eqref{eq:cyclic} is the Gauss--Seidel form of Horst's algorithm~\citep{Horst1961,ChuWatterson1993}. \citet[Theorem~4.2]{ChuWatterson1993} prove monotone convergence of the simultaneous update for positive definite block matrices. Related Gauss--Seidel and alternating-variable methods have monotonicity results~\citep{ZhangChu2012,ZhangLiao2012}, as do algorithms for MAXBET, MAXDIFF, and regularized multiblock criteria~\citep{tenBergeKnol1984,tenBerge1988,HanafiKiers2006,Tenenhaus2011}. For the zero-diagonal bilinear objective here, each block subproblem is linear. The direct proof below gives a per-block version of the sweep-improvement identity of \citet[eq.~3.5 and Theorem~3.1]{ZhangLiao2012}.
\begin{theorem}[Monotonicity and nondegenerate stationary accumulation points]
For the update \eqref{eq:cyclic}: (i) every block update weakly increases $\Fhat$ and exactly maximizes it over that block with all others fixed; (ii) objective values converge; (iii) any accumulation point of sweep-end iterates at which every block gradient is nonzero is coordinatewise maximizing and stationary.
\end{theorem}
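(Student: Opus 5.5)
Part (i) comes from the block decomposition $\Fhat(u)=u_b^\top g_b(u)+C_b$ stated just before the theorem. The key point is that $g_b(u)$ and $C_b$ do not depend on $u_b$. This holds because $w_{bb}=0$, so the objective contains no quadratic term in $u_b$; the symmetry $w_{bc}=w_{cb}$ and $\Ahat_{cb}=\Ahat_{bc}^\top$ lets every term involving $u_b$ be written as $u_b^\top w_{bc}\Ahat_{bc}u_c$. With the other blocks fixed, the block subproblem is therefore the linear problem $\max_{\|u_b\|_2=1}u_b^\top g_b$. By Cauchy--Schwarz its value is $\|g_b\|_2$, attained at $g_b/\|g_b\|_2$ when $g_b\ne0$. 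When $g_b=0$ the block objective is constant, so keeping $u_b$ is also a maximizer. Either way the update \eqref{eq:cyclic} exactly maximizes over block $b$, and in particular it does not decrease $\Fhat$.

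I would also record the quantitative per-block improvement. If $u_b^+=g_b/\|g_b\|_2$, then
\[
\Fhat^+-\Fhat=\|g_b\|_2-u_b^\top g_b=\tfrac12\|g_b\|_2\,\|u_b^+-u_b\|_2^2 .
\]
This is the per-block analogue of the Zhang--Liao sweep identity. It is not needed for (ii), but it will be the main tool in (iii).

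For (ii), the objective is bounded above on the compact feasible set:
\[
|\Fhat|\le\sum_{b<c}w_{bc}\|\Ahat_{bc}\|_\op .
\]
By (i) the sequence of objective values, whether recorded after each block update or at the end of each sweep, is monotone nondecreasing. A bounded monotone sequence converges.

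Part (iii) is where the real work lies. Let $u^{(t_j)}\to u^*$ be a subsequence of sweep-end iterates, with $g_b(u^*)\ne0$ for every $b$. I would argue as follows.

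\begin{enumerate}
\item Write the sweep map as a composition of block maps $T_B\circ\cdots\circ T_1$. Each $T_b$ is continuous on the set where $g_b\ne0$, because there it is given by $u_b\mapsto g_b(u)/\|g_b(u)\|_2$. Since $g_b(u^*)\ne0$, each $g_b$ is nonzero in a neighbourhood of $u^*$.
\item Fix the first block and pass to the limit. The intermediate point after updating block $1$ from $u^{(t_j)}$ converges to $T_1(u^*)$ by continuity. The improvement identity gives
\[
\tfrac12\|g_1\|_2\,\|u_1^+-u_1\|_2^2\le \Fhat(u^{(t_j+1)})-\Fhat(u^{(t_j)}) .
\]
The right side tends to zero by (ii), and $\|g_1\|_2$ is bounded away from zero near $u^*$. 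Hence $\|u_1^+-u_1\|_2\to0$, so $T_1(u^*)=u^*$, that is, $u_1^*=g_1(u^*)/\|g_1(u^*)\|_2$.
\item Induct over blocks. Since $T_1(u^*)=u^*$, the intermediate points after block $1$ also converge to $u^*$. The same argument then applies to block $2$, and so on, using $g_b(u^*)\ne0$ and continuity at each stage.
\end{enumerate}

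The hardest step is this induction, because one must make sure the intermediate points stay in the region where $g_b\ne0$. Continuity, together with the fact that each intermediate sequence converges to $u^*$, supplies this. The conclusion is $u_b^*=g_b(u^*)/\|g_b(u^*)\|_2$ for all $b$. Two consequences follow. First, $u^*$ is coordinatewise maximizing, since by (i) this unit vector maximizes the linear block subproblem. Second, $g_b(u^*)$ is parallel to $u_b^*$ for every block, which is exactly the stationarity condition on the product of spheres.
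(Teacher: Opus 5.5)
Your proposal is correct and follows essentially the same route as the paper. It uses the linear block subproblem for (i) and boundedness on the compact product of spheres for (ii); for (iii) it combines the per-block improvement identity $\tfrac12\|g_b\|_2\|u_b^{+}-u_b\|_2^2$ with an induction through the sweep, driven by $g_b(u^\star)\ne0$ and continuity of the normalized update. It also makes explicit a step the paper leaves implicit: each nonnegative block increment is bounded by the full-sweep increment $\Fhat(u^{(t_j+1)})-\Fhat(u^{(t_j)})$, which tends to $0$.
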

\begin{proof}
With all other blocks fixed, the problem is $\max_{\|u_b\|=1}u_b^\top g_b+C_b$, whose maximum is $\|g_b\|$ and is attained by \eqref{eq:cyclic} when $g_b\ne0$. Thus the objective is monotone. Compactness of the feasible product of spheres implies boundedness, hence convergence of objective values. Consider sweep-end iterates $u^{t_\ell}\to u^\star$, equivalently the starts of the following sweeps, with all $g_b(u^\star)\ne0$. Every ensuing per-block objective increment tends to zero. The exact block-improvement identity
\[
\Delta_{t,b}=\frac{\|g_b\|}{2}\|u_b^{\rm new}-u_b^{\rm old}\|_2^2
\]
first forces the movement of block 1 to vanish, because its gradient converges to $g_1(u^\star)\ne0$. The iterate after that update still converges to $u^\star$, so block 2's gradient also converges to its nonzero limiting value; induction gives the same conclusion through the entire sweep. Continuity of each normalized update now yields $u_b^\star=g_b(u^\star)/\|g_b(u^\star)\|$, proving coordinatewise optimality and stationarity.
\end{proof}

\subsection{Computable upper bounds on optimization error}
Upper bounds on the global optimum give a computable measure of the remaining optimization error.

Let $\mathcal A$ be the symmetric block matrix with zero diagonal blocks and off-diagonal blocks $\mathcal A_{bc}=w_{bc}\Ahat_{bc}$. For stacked $u$, $\|u\|_2^2=B$ and $2\Fhat(u)=u^\top\mathcal Au$.
\begin{proposition}[Objective upper bounds]
\label{prop:upperbounds}
Let $\Fhat^\star$ be the global optimum. Then
\begin{equation}
\Fhat^\star\le U_{\rm spec}:=\frac{B}{2}\lambda_{\max}(\mathcal A),
\qquad
\Fhat^\star\le U_{\rm pair}:=\sum_{b<c}w_{bc}\sigma_1(\Ahat_{bc}).
\label{eq:upperbounds}
\end{equation}
For any feasible $\widetilde u$,
\begin{equation}
\overline\eta(\widetilde u)=\min(U_{\rm spec},U_{\rm pair})-\Fhat(\widetilde u)
\label{eq:etaub}
\end{equation}
is a nonnegative upper bound on $\Fhat^\star-\Fhat(\widetilde u)$.
\end{proposition}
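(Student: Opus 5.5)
Both bounds come from relaxing the feasible set, and the gap statement then follows from the definition of the global optimum.

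\emph{Spectral bound.} Stack the feasible directions into $u=(u_1^\top,\ldots,u_B^\top)^\top$, so that $\|u\|_2^2=B$. Because $\mathcal A$ has zero diagonal blocks and $\mathcal A_{cb}=w_{cb}\Ahat_{cb}=w_{bc}\Ahat_{bc}^\top$, symmetry of the pair weights gives
\[
u^\top\mathcal Au=\sum_{b\ne c}w_{bc}u_b^\top\Ahat_{bc}u_c=2\Fhat(u).
\]
The product of spheres lies inside the sphere of radius $\sqrt B$ in $\R^{\sum_b p_b}$. The Rayleigh quotient therefore gives
\[
2\Fhat(u)\le\lambda_{\max}(\mathcal A)\|u\|_2^2=B\lambda_{\max}(\mathcal A)
\]
for every feasible $u$. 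Taking the supremum yields $\Fhat^\star\le U_{\rm spec}$. No sign assumption on $\lambda_{\max}$ is needed. Since $\mathcal A$ has zero trace, $\lambda_{\max}(\mathcal A)\ge0$ in any case.

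\emph{Pairwise bound.} Bound the objective term by term. For unit $u_b,u_c$, the variational characterization used for \eqref{eq:two} gives $u_b^\top\Ahat_{bc}u_c\le\sigma_1(\Ahat_{bc})$. Because $w_{bc}\ge0$, multiplying by the weights preserves the inequalities, and summing over $b<c$ gives $\Fhat(u)\le U_{\rm pair}$ for every feasible $u$. This is a relaxation in which each edge is allowed its own pair of directions.

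\emph{Gap statement.} The feasible set is a compact product of spheres and $\Fhat$ is continuous, so $\Fhat^\star$ exists. Any feasible $\widetilde u$ satisfies $\Fhat(\widetilde u)\le\Fhat^\star\le\min(U_{\rm spec},U_{\rm pair})$. Hence $\overline\eta(\widetilde u)\ge\Fhat^\star-\Fhat(\widetilde u)\ge0$.

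The only step that needs care is the identity $u^\top\mathcal Au=2\Fhat(u)$. It requires the weights to be symmetric, the diagonal blocks of $\mathcal A$ to be zero, and $\Ahat_{cb}=\Ahat_{bc}^\top$; all three are part of the definitions.
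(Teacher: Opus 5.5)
Your proof is correct and follows the paper's argument: the spectral bound comes from the Rayleigh quotient with $\|u\|_2^2=B$, the pairwise bound from $u_b^\top\Ahat_{bc}u_c\le\sigma_1(\Ahat_{bc})$ applied edge by edge, and nonnegativity of the gap from $\Fhat(\widetilde u)\le\Fhat^\star$. You also make explicit some details the paper leaves implicit: the identity $u^\top\mathcal Au=2\Fhat(u)$, the use of $w_{bc}\ge0$, and existence of the optimum by compactness.
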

\begin{proof}
The spectral bound follows from the Rayleigh quotient and $\|u\|_2^2=B$. The pairwise bound follows from $u_b^\top\Ahat_{bc}u_c\le\sigma_1(\Ahat_{bc})$ for each pair.
\end{proof}
The block-multiplier certificate of \citet[Result~1]{HanafitenBerge2003} gives another sufficient condition for global optimality. At a stationary point $u$, let $\lambda_b=u_b^\top(\mathcal Au)_b$. If $\operatorname{diag}(\lambda_b I_{p_b})-\mathcal A$ is positive semidefinite, then every feasible $v$ satisfies $\Fhat(v)\le\tfrac12\sum_b\lambda_b=\Fhat(u)$.

\section{Multiblock population identifiability and stability}
\label{sec:multiblock-identifiability}
A common program across cell types requires pairwise associations that support compatible feature directions. Population models make this compatibility precise, and the resulting bounds relate recovery of a shared program to association strength, connections among types, matrix error, and numerical accuracy.

The results below take pairwise matrix errors as given and do not require the Gaussian null of Section~\ref{sec:null}.

\subsection{Coherent rank-one population model}
The simplest shared program assigns one direction to each type and uses that direction in every association involving the type. This model isolates how the strength and pattern of pairwise connections identify the directions and their relative orientation.

\begin{assumption}[Coherent rank-one pair matrices]
\label{ass:coherent}
For every pair with $w_{bc}>0$,
\begin{equation}
\Abar_{bc}=\theta_{bc}\lambda_b\lambda_c^\top,
\qquad \theta_{bc}>0,\qquad \|\lambda_b\|_2=1.
\label{eq:coherent}
\end{equation}
\end{assumption}
The assumption requires edge signs that can be made simultaneously positive by orienting the blocks; arbitrary signed rank-one pair matrices need not have this property. Let the graph contain edge $\{b,c\}$ when $w_{bc}\theta_{bc}>0$, and define
\begin{equation}
a_{bc}=w_{bc}\theta_{bc},\qquad D_b=\sum_{c\ne b}a_{bc}.
\label{eq:degrees}
\end{equation}
For $t_b=\lambda_b^\top u_b$, the population objective is $F^0(u)=\sum_{b<c}a_{bc}t_bt_c$.

\begin{theorem}[Population identifiability and exact loss decomposition]
Under Assumption~\ref{ass:coherent},
\begin{equation}
F^{0\star}-F^0(u)
=\frac12\sum_{b<c}a_{bc}
\{(t_b-t_c)^2+(1-t_b^2)+(1-t_c^2)\},
\label{eq:loss}
\end{equation}
and therefore
\begin{equation}
F^{0\star}-F^0(u)
\ge\frac12\sum_{b=1}^BD_b\sin^2\angle(u_b,\lambda_b).
\label{eq:quadgrowth}
\end{equation}
If the graph is connected, the population maximizers are exactly the common positive and common negative sign choices. On a disconnected graph, each nontrivial connected component may choose its sign independently; an isolated vertex is not identified.
\end{theorem}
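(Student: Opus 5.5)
First I pin down the population optimum. Under Assumption~\ref{ass:coherent}, $u_b^\top\Abar_{bc}u_c=\theta_{bc}t_bt_c$, so $F^0(u)=\sum_{b<c}a_{bc}t_bt_c$. Each unit $u_b$ gives $|t_b|\le1$, and the aligned choice $u_b=\lambda_b$ attains $\sum_{b<c}a_{bc}$. For any feasible $u$, $t_bt_c\le|t_b||t_c|\le1$, hence $F^0(u)\le\sum_{b<c}a_{bc}$. Therefore $F^{0\star}=\sum_{b<c}a_{bc}$. The decomposition \eqref{eq:loss} is then a termwise algebraic identity:
\[
1-t_bt_c=\tfrac12\{(t_b-t_c)^2+(1-t_b^2)+(1-t_c^2)\}.
\]
To check it, expand $(t_b-t_c)^2=t_b^2-2t_bt_c+t_c^2$. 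Multiply by $a_{bc}$ and sum over edges.

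For \eqref{eq:quadgrowth}, drop the nonnegative terms $(t_b-t_c)^2$. For a unit vector, $1-t_b^2=1-(\lambda_b^\top u_b)^2=\sin^2\angle(u_b,\lambda_b)$ by definition of the principal angle. Each vertex $b$ appears in the term $(1-t_b^2)$ once for every incident edge. Collecting these gives $\tfrac12\sum_b\bigl(\sum_{c\ne b}a_{bc}\bigr)(1-t_b^2)=\tfrac12\sum_b D_b\sin^2\angle(u_b,\lambda_b)$, which is the claimed bound.

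For identifiability, suppose $u$ is a maximizer. The loss is then zero, and since each term in \eqref{eq:loss} is nonnegative and multiplied by $a_{bc}>0$ on edges, every summand vanishes. For each non-isolated vertex $b$, $D_b>0$ forces $t_b^2=1$, so $u_b=t_b\lambda_b$ with $t_b\in\{\pm1\}$. For each edge, $t_b=t_c$. On a connected graph every vertex is non-isolated when $B\ge2$, and the signs propagate along paths. The maximizers are therefore $u_b=\lambda_b$ for all $b$, or $u_b=-\lambda_b$ for all $b$. Conversely, both choices give $t_bt_c=1$ on every edge and attain $F^{0\star}$.

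On a disconnected graph, the same argument runs within each nontrivial connected component. The sign constraint $t_b=t_c$ only links vertices joined by edges, so each nontrivial component may choose its sign independently. An isolated vertex has $D_b=0$: its $u_b$ does not appear in $F^0$, so any unit vector is optimal and $\lambda_b$ is not identified.

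I expect no real obstacle here. The only care needed is in the counting step that produces the degrees $D_b$, and in noting that edges are exactly the pairs with $a_{bc}>0$, so zero-weight pairs contribute nothing.
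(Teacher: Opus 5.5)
Your proposal is correct and takes essentially the same route as the paper: the edgewise identity $2(1-t_bt_c)=(t_b-t_c)^2+(1-t_b^2)+(1-t_c^2)$ gives the loss decomposition; dropping the difference terms and collecting degrees gives the growth bound; and requiring every summand to vanish at a maximizer forces $|t_b|=1$ and propagates a common sign along edges. You also spell out steps the paper leaves implicit: that $F^{0\star}=\sum_{b<c}a_{bc}$, the degree count, and the converse that both common sign choices attain the optimum.
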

\begin{proof}
Use the identity $2(1-t_bt_c)=(t_b-t_c)^2+(1-t_b^2)+(1-t_c^2)$ edge by edge. Equality at the optimum requires $t_bt_c=1$ on every positive edge, which forces $t_b=t_c=\pm1$ and propagates a common sign through each connected component.
\end{proof}
The graph has one vertex per block, not per spatial observation. In CoPro's all-pairs objective it is complete under Assumption~\ref{ass:coherent}, so the only ambiguity is a simultaneous reversal of every block; with selected pairs, the connected components reverse independently. In the scalar case $p_b=1$, after orienting each block so that $\lambda_b=1$, the population block matrix is entrywise nonnegative, and uniqueness of the maximizer up to a common sign for nonnegative irreducible block matrices is a Perron--Frobenius-type result~\citep{ZhangLiaoSun2011}; see also \citet{HanafitenBerge2003}. This is the noiseless case of $\mathbb Z_2$ synchronization, where node signs are recoverable up to a global flip exactly when the graph is connected~\citep{Abbe2014}.

\subsection{Stability of an approximate sample optimizer}
A computed fit differs from the population optimum through both sampling variation and incomplete optimization. Write
\begin{equation}
\Ahat_{bc}=\Abar_{bc}+E_{bc},\qquad
\Delta=\sum_{b<c}w_{bc}\|E_{bc}\|_\op.
\label{eq:Delta}
\end{equation}
\begin{theorem}[Matrix perturbation and optimization error decomposition]
\label{thm:multistability}
Let $\widetilde u$ satisfy $\Fhat^\star-\Fhat(\widetilde u)\le\eta$ for some $\eta\ge0$. Under Assumption~\ref{ass:coherent},
\begin{equation}
\sum_{b=1}^BD_b\sin^2\angle(\widetilde u_b,\lambda_b)
\le4\Delta+2\eta.
\label{eq:multistability}
\end{equation}
The unknown $\eta$ may be replaced by \eqref{eq:etaub}.
This bound controls the unoriented direction in each block: its loss is unchanged by a separate sign flip of any block. Relative orientation is treated in the remark below.
\end{theorem}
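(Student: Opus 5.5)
The plan is the standard "population suboptimality plus uniform perturbation" sandwich, combined with the quadratic growth bound \eqref{eq:quadgrowth}. First, I would note that for every feasible $u$ (unit vectors in each block), $|\Fhat(u)-F^0(u)|=|\sum_{b<c}w_{bc}u_b^\top E_{bc}u_c|\le\sum_{b<c}w_{bc}\|E_{bc}\|_\op=\Delta$. This holds because $|u_b^\top E_{bc}u_c|\le\|E_{bc}\|_\op$ for unit vectors. Pairs with $w_{bc}>0$ but $\theta_{bc}$ irrelevant are still covered, since Assumption~\ref{ass:coherent} fixes $\Abar_{bc}$ for all positive-weight pairs and zero-weight pairs contribute nothing to either objective. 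Consequently the global optima also satisfy $|\Fhat^\star-F^{0\star}|\le\Delta$.

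Next, I would chain the inequalities:
\[
F^{0\star}-F^0(\widetilde u)
\le(\Fhat^\star+\Delta)-(\Fhat(\widetilde u)-\Delta)
=\Fhat^\star-\Fhat(\widetilde u)+2\Delta\le\eta+2\Delta .
\]
More carefully, $F^{0\star}=F^0(u^0)\le\Fhat(u^0)+\Delta\le\Fhat^\star+\Delta$ for a population maximizer $u^0$, and $F^0(\widetilde u)\ge\Fhat(\widetilde u)-\Delta$. I would then apply \eqref{eq:quadgrowth}, which gives $\tfrac12\sum_bD_b\sin^2\angle(\widetilde u_b,\lambda_b)\le F^{0\star}-F^0(\widetilde u)\le 2\Delta+\eta$. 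Multiplying by two yields \eqref{eq:multistability}.

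For the remaining claims, the substitution of \eqref{eq:etaub} is immediate from Proposition~\ref{prop:upperbounds}, since $\overline\eta(\widetilde u)\ge\Fhat^\star-\Fhat(\widetilde u)$ is a valid choice of $\eta$. The sign remark follows because $\sin^2\angle(u_b,\lambda_b)=1-t_b^2$ depends only on $|t_b|$.

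The only step needing care is the uniform perturbation bound. It must hold simultaneously at $\widetilde u$ and at a population maximizer, and the two are not the same point. Using the operator norm handles this, since it controls every feasible point at once. With that in place, there is no real obstacle: the loss decomposition \eqref{eq:loss} already supplies the curvature.
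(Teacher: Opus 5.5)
Your proposal is correct and follows essentially the same route as the paper. Both arguments use the uniform operator-norm bound $|\Fhat(u)-F^0(u)|\le\Delta$ on the product of spheres and sandwich through a population maximizer to get $F^{0\star}-F^0(\widetilde u)\le\eta+2\Delta$. Both then apply the quadratic growth inequality \eqref{eq:quadgrowth} and multiply by two.
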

\begin{proof}
Uniformly over feasible $u$, $|\Fhat(u)-F^0(u)|\le\Delta$. If $u^0$ is a population maximizer,
\[
F^{0\star}-F^0(\widetilde u)
\le \Fhat^\star-\Fhat(\widetilde u)+2\Delta
\le\eta+2\Delta.
\]
Combine with \eqref{eq:quadgrowth} and multiply by two.
\end{proof}

Under a stochastic model, the exact second moments of Section~\ref{sec:matrix-noise} supply a single event on which every pairwise error is controlled, turning Theorem~\ref{thm:multistability} into a probability statement.

\begin{corollary}[High-probability recovery from pairwise second moments]
\label{cor:hp-recovery}
Conditional on the design, let the pair matrices $\Ahat_{bc}$ be random with $\E\Ahat_{bc}=\Abar_{bc}$ and finite $V_{bc}=\E\|\Ahat_{bc}-\Abar_{bc}\|_\F^2$ for every pair with $w_{bc}>0$. Define
\begin{equation}
S=\sum_{b<c}w_{bc}\sqrt{V_{bc}},\qquad r_\alpha=\frac{S}{\sqrt\alpha},\qquad 0<\alpha<1.
\label{eq:r-alpha}
\end{equation}
Then $\Pr(\Delta\le r_\alpha)\ge1-\alpha$ for $\Delta$ in~\eqref{eq:Delta}, and $\Delta=0$ almost surely if $S=0$. No independence between the errors of different pairs is required. Consequently, under Assumption~\ref{ass:coherent}, with probability at least $1-\alpha$ every feasible $\widetilde u$ satisfies
\begin{equation}
\sum_{b=1}^BD_b\sin^2\angle(\widetilde u_b,\lambda_b)\le4r_\alpha+2\overline\eta(\widetilde u),
\label{eq:hp-recovery}
\end{equation}
with the computable upper gap $\overline\eta(\widetilde u)$ of~\eqref{eq:etaub}. Under joint mean-zero Gaussianity of the blocks, $V_{bc}$ is given exactly by Proposition~\ref{prop:nonnull-variance} applied entrywise.
\end{corollary}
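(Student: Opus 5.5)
The plan is to control the single scalar $\Delta$ by its first moment and then feed the resulting event into Theorem~\ref{thm:multistability}. Only pairs with $w_{bc}>0$ contribute to $\Delta$, so write $\Delta=\sum_{b<c,\,w_{bc}>0}w_{bc}\|E_{bc}\|_\op$.

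First I will bound the expectation of each term. Since $\|E_{bc}\|_\op\le\|E_{bc}\|_\F$, Jensen's inequality (or Cauchy--Schwarz) gives $\E\|E_{bc}\|_\op\le(\E\|E_{bc}\|_\F^2)^{1/2}=\sqrt{V_{bc}}$. Linearity of expectation then gives $\E\Delta\le S$. This step uses only the marginal law of each pair error, which is why no independence across pairs is needed; the errors share blocks and are generally dependent.

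Next I will handle the two cases for $S$. If $S>0$, Markov's inequality gives $\Pr(\Delta>r_\alpha)\le S/r_\alpha=\sqrt\alpha\le\alpha$? That inequality fails for $\alpha<1$, since $\sqrt\alpha>\alpha$. This is the main obstacle: a first-moment bound only delivers the radius $S/\alpha$. To reach $S/\sqrt\alpha$ I will use the second moment instead. By Cauchy--Schwarz over pairs,
\[
\Delta^2\le\Big(\sum w_{bc}\|E_{bc}\|_\F\Big)^2,
\]
and Minkowski's inequality in $L^2$ gives $(\E\Delta^2)^{1/2}\le\sum w_{bc}(\E\|E_{bc}\|_\F^2)^{1/2}=S$. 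Chebyshev--Markov applied to $\Delta^2$ then gives $\Pr(\Delta>r_\alpha)\le \E\Delta^2/r_\alpha^2\le S^2\alpha/S^2=\alpha$. This route again needs no independence, because Minkowski is the triangle inequality for the $L^2$ norm and holds for dependent summands. If $S=0$, every $V_{bc}=0$, so each $E_{bc}=0$ almost surely, and a finite union of null events gives $\Delta=0$ almost surely.

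Finally, on the event $\{\Delta\le r_\alpha\}$, I will apply Theorem~\ref{thm:multistability} to an arbitrary feasible $\widetilde u$ with $\eta=\overline\eta(\widetilde u)$. This choice is valid by Proposition~\ref{prop:upperbounds}, since $\Fhat^\star-\Fhat(\widetilde u)\le\overline\eta(\widetilde u)$ holds deterministically for all feasible points simultaneously. Theorem~\ref{thm:multistability} then gives $4\Delta+2\overline\eta\le4r_\alpha+2\overline\eta$ uniformly in $\widetilde u$, because the event does not depend on $\widetilde u$. The Gaussian remark follows directly from Proposition~\ref{prop:nonnull-variance} with $L=K_{bc}/\kappa_{bc}$ and $a=X_be_i$, $b=X_ce_j$, summing over entries $(i,j)$ as in Corollary~\ref{cor:matrix-rms}.
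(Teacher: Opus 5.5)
Your final argument is correct and is essentially the paper's proof. You bound $(\E\Delta^2)^{1/2}\le S$ using $\|E_{bc}\|_\op\le\|E_{bc}\|_\F$ and Minkowski's inequality in $L^2$, apply Markov's inequality to $\Delta^2$, and then invoke Theorem~\ref{thm:multistability} with $\eta=\overline\eta(\widetilde u)$ on the single event $\{\Delta\le r_\alpha\}$. In a write-up, drop the abandoned first-moment attempt, and note that the step $\Delta^2\le(\sum w_{bc}\|E_{bc}\|_\F)^2$ is just the bound $\|E\|_\op\le\|E\|_\F$, not Cauchy--Schwarz.
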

\begin{proof}
By Minkowski's inequality in $L^2$ and $\|E\|_\op\le\|E\|_\F$,
\[
\{\E\Delta^2\}^{1/2}\le\sum_{b<c}w_{bc}\{\E\|E_{bc}\|_\op^2\}^{1/2}\le\sum_{b<c}w_{bc}\sqrt{V_{bc}}=S .
\]
Markov's inequality applied to $\Delta^2$ gives $\Pr(\Delta>r_\alpha)\le S^2/r_\alpha^2=\alpha$; if $S=0$ then $\E\Delta^2=0$. On the event $\{\Delta\le r_\alpha\}$, Theorem~\ref{thm:multistability} with $\eta=\overline\eta(\widetilde u)\ge\Fhat^\star-\Fhat(\widetilde u)$ gives~\eqref{eq:hp-recovery}.
\end{proof}

The spatial design, kernel, and covariance model determine the second moments $V_{bc}$; these determine $r_\alpha$; the fit determines $\overline\eta$; and the population model supplies the degrees $D_b$ against which the right side is compared. The bound is informative only when $4r_\alpha+2\overline\eta<\sum_bD_b$. The single event $\{\Delta\le r_\alpha\}$ is reused for the orientation and deflation statements below. The null-only radius of Theorem~\ref{thm:operator-concentration} must not be substituted for $r_\alpha$ in a nonnull recovery claim.

\begin{remark}[Relative orientation and weak graph connections]
Writing $t_b=\lambda_b^\top\widetilde u_b$, the same proof and the full identity~\eqref{eq:loss} also give
\begin{equation}
\sum_{b<c}a_{bc}(t_b-t_c)^2+
\sum_bD_b(1-t_b^2)\le4\Delta+2\eta.
\label{eq:orientation-energy}
\end{equation}
The first term retains information about relative orientation. Connectedness fixes the relative signs at an exact population optimum, leaving one common sign undetermined, but a weak graph cut can make opposing orientations nearly optimal. For example, let $a_{12}=a_{34}=1$ and $a_{23}=\varepsilon>0$, with no other edges. The directions $(\lambda_1,\lambda_2,-\lambda_3,-\lambda_4)$ have zero per-block sine loss but population objective loss $2\varepsilon$, although all weighted degrees are at least one. The relative-sign problem is a synchronization problem over the block graph~\citep{Singer2011,BandeiraSingerSpielman2013}.
\end{remark}

\begin{corollary}[Recovery up to one common sign]
\label{cor:common-sign}
Under Assumption~\ref{ass:coherent} with a connected graph, let $L_a$ be the Laplacian of the block graph with edge weights $a_{bc}$, let $\lambda_2(L_a)>0$ be its algebraic connectivity, and put $D_{\min}=\min_bD_b$ and $\mu=\min\{D_{\min},\lambda_2(L_a)\}$. If $\Fhat^\star-\Fhat(\widetilde u)\le\eta$ and $R=4\Delta+2\eta$, then, with $t_b=\lambda_b^\top\widetilde u_b$ and $\bar t=B^{-1}\sum_bt_b$,
\begin{equation}
\min_{s\in\{-1,1\}}\sum_{b=1}^B\|\widetilde u_b-s\lambda_b\|_2^2
=2B(1-|\bar t|)\le\min\Bigl\{2B,\frac{2R}{\mu}\Bigr\}.
\label{eq:common-sign}
\end{equation}
If $R<\mu$, all $t_b$ have the same nonzero sign. Replacing $\Delta$ by $r_\alpha$ and $\eta$ by $\overline\eta(\widetilde u)$ gives the same conclusions with probability at least $1-\alpha$ under the conditions of Corollary~\ref{cor:hp-recovery}.
\end{corollary}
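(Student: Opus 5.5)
The plan is to reduce everything to the orientation-energy inequality \eqref{eq:orientation-energy} of the preceding remark and then split $t=(t_1,\ldots,t_B)^\top$ into a mean part and a part orthogonal to $\1$. The conclusion has three pieces: the exact identity, the bound involving $\mu$, and the sign claim.

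\emph{The identity.} For unit vectors, $\|\widetilde u_b-s\lambda_b\|_2^2=2-2st_b$. Summing over $b$ gives $2B-2s\sum_bt_b$. Minimizing over $s\in\{-1,1\}$ gives $2B-2|\sum_bt_b|=2B(1-|\bar t|)$. The bound $2B$ is immediate because $|\bar t|\ge0$.

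\emph{The bound $2R/\mu$.} By \eqref{eq:orientation-energy},
\[
\sum_{b<c}a_{bc}(t_b-t_c)^2+\sum_bD_b(1-t_b^2)\le R.
\]
Write $t=\bar t\1+w$ with $w\perp\1$. The first term is $w^\top L_aw\ge\lambda_2(L_a)\|w\|_2^2\ge\mu\|w\|_2^2$. Since $|t_b|\le1$, every summand $1-t_b^2$ is nonnegative, so the second term is at least
\[
\mu\sum_b(1-t_b^2)=\mu\bigl(B-B\bar t^2-\|w\|_2^2\bigr).
\]
Adding the two lower bounds cancels $\|w\|_2^2$ and gives $R\ge\mu B(1-\bar t^2)$. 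Since $|\bar t|\le1$, we have $1-|\bar t|\le1-\bar t^2$. Hence $2B(1-|\bar t|)\le2R/\mu$.

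\emph{The sign claim when $R<\mu$.} Use the equivalent edge form of the left side of \eqref{eq:orientation-energy}, which by the identity behind \eqref{eq:loss} equals $\sum_{b<c}2a_{bc}(1-t_bt_c)$.

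First, if some $t_b=0$, then the second term alone contributes $D_b\ge D_{\min}\ge\mu>R$, a contradiction. So every $t_b$ is nonzero.

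Next, suppose the signs are not all equal. Let $S=\{b:t_b>0\}$, so that both $S$ and $S^c$ are nonempty. Every edge crossing the cut has $t_bt_c<0$ and therefore contributes at least $2a_{bc}$. Thus $R\ge2\,\mathrm{cut}_a(S)$. To bound the cut, take the test vector $x=\1_S-(|S|/B)\1$, which is orthogonal to $\1$. The Rayleigh quotient gives
\[
\mathrm{cut}_a(S)=x^\top L_ax\ge\lambda_2(L_a)\|x\|_2^2=\lambda_2(L_a)\,\frac{|S|\,|S^c|}{B}.
\]
Since $|S|\,|S^c|\ge B-1\ge B/2$ for $B\ge2$, this gives $R\ge\lambda_2(L_a)\ge\mu$, again a contradiction.

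\emph{Probability version.} This follows by working on the event $\{\Delta\le r_\alpha\}$ of Corollary~\ref{cor:hp-recovery}, with $\eta=\overline\eta(\widetilde u)$. On that event $R\le4r_\alpha+2\overline\eta(\widetilde u)$, and every bound above is monotone in $R$.

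The main obstacle is the sign claim, because it needs the cut-versus-$\lambda_2$ inequality with the right constant. The choice of test vector above supplies it.
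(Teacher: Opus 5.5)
Your proof is correct. The identity and the bound $2B(1-|\bar t|)\le 2R/\mu$ follow the paper's argument exactly: split $t$ into its mean and its part orthogonal to $\1$, use $\lambda_2(L_a)$ on the orthogonal part and $D_{\min}$ on $\sum_bD_b(1-t_b^2)$, then use $1-|\bar t|\le1-\bar t^2$.

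The difference is in the sign claim. The paper obtains it directly from the bound just proved. If $R<\mu$, the aligned total $\min_s\sum_b\|\widetilde u_b-s\lambda_b\|_2^2$ is below $2$. For the minimizing $s$, any block with $st_b\le0$ contributes $2-2st_b\ge2$ by itself, which rules out zero and opposite-sign coordinates at once. So the step you single out as the main obstacle is really a one-line consequence of \eqref{eq:common-sign}.

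You instead return to the edge form $\sum_{b<c}2a_{bc}(1-t_bt_c)$. You exclude $t_b=0$ via $D_b>R$, and you lower-bound the weight of the sign-splitting cut with the Fiedler test vector $x=\1_S-(|S|/B)\1$. Each step checks out, including $\|x\|_2^2=|S|\,|S^c|/B$ and $|S|\,|S^c|\ge B-1\ge B/2$.

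Your route buys a sharper and more interpretable condition. It shows that mixed signs force $R\ge2\,\mathrm{cut}_a(S)$ for the cut separating positive from negative $t_b$. Hence the common-sign conclusion holds whenever $R<\min\{D_{\min},2c^\star\}$, where $c^\star$ is the minimum of $\mathrm{cut}_a(S)$ over nonempty proper subsets $S$. This threshold is never below $\mu$, and it can be far above it when $\lambda_2(L_a)$ is small compared with the minimum cut. For a path of $B$ blocks with unit weights, $\lambda_2(L_a)$ is of order $B^{-2}$ while the minimum cut is $1$. Your argument also exhibits directly the weak-cut mechanism described in the preceding remark.

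The paper's route is shorter and keeps everything expressed through the single constant $\mu$.
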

\begin{proof}
With $t=(t_1,\ldots,t_B)^\top$, the first term of~\eqref{eq:orientation-energy} is $t^\top L_at=\sum_{b<c}a_{bc}(t_b-t_c)^2$. On a connected graph the kernel of $L_a$ is spanned by $\1$, so $t^\top L_at\ge\lambda_2(L_a)\sum_b(t_b-\bar t)^2$. Using $D_b\ge D_{\min}$ and $1-t_b^2\ge0$ in~\eqref{eq:orientation-energy},
\[
\mu\Bigl\{\sum_b(t_b-\bar t)^2+\sum_b(1-t_b^2)\Bigr\}=\mu B(1-\bar t^2)\le R,
\]
because $\sum_b(t_b-\bar t)^2=\sum_bt_b^2-B\bar t^2$. For unit vectors, $\|\widetilde u_b-s\lambda_b\|_2^2=2-2st_b$, so the minimum over $s$ of the sum is $2B-2|\sum_bt_b|=2B(1-|\bar t|)$, and $1-|\bar t|\le1-\bar t^2$ because $|\bar t|\le1$. The bound $2B$ always holds. If $R<\mu$, the aligned total is below two, whereas a block with $st_b\le0$ alone contributes at least two.
\end{proof}

The bound separates two mechanisms: blockwise direction loss is governed by the degrees $D_b$ in Theorem~\ref{thm:multistability}, whereas orientation up to one common sign is additionally governed by $\lambda_2(L_a)$. In the example above, with $a_{12}=a_{34}=1$ and $a_{23}=\varepsilon$, every degree is at least one but the test vector $(1,1,-1,-1)/2$ gives $\lambda_2(L_a)\le\varepsilon$, so $\mu\le\varepsilon$. Substantial degrees therefore do not protect relative orientation across a weak cut.

\subsection{A first-order bound at stationary points}
\label{sec:first-order}
The objective-comparison bound controls the squared sine by the matrix error, so the direction error itself scales like the square root of that error. At stationary points with positive block multipliers, the first-order condition supplies a linear bound with an observable normalizer.

Recall $g_b(u)=\sum_{c\ne b}w_{bc}\Ahat_{bc}u_c$ from Section~\ref{sec:optimization}, and write $P_b^\perp=I-\lambda_b\lambda_b^\top$ and $\Delta_b=\sum_{c\ne b}w_{bc}\|E_{bc}\|_\op$, so that $\sum_b\Delta_b=2\Delta$.
\begin{proposition}[First-order perturbation bound]
\label{prop:first-order}
Under Assumption~\ref{ass:coherent}, let $\widetilde u$ satisfy the stationarity condition $g_b(\widetilde u)=\nu_b\widetilde u_b$ for block $b$ with $\nu_b=\widetilde u_b^\top g_b(\widetilde u)>0$. Then
\begin{equation}
\sin\angle(\widetilde u_b,\lambda_b)
\le\frac{1}{\nu_b}\sum_{c\ne b}w_{bc}\|P_b^\perp E_{bc}\|_\op
\le\frac{\Delta_b}{\nu_b}.
\label{eq:first-order}
\end{equation}
With $t_c=\lambda_c^\top\widetilde u_c$, the multiplier satisfies $\nu_b\ge t_b\sum_{c\ne b}a_{bc}t_c-\Delta_b$. In particular, if all $t_c$ have a common sign and $\min_c|t_c|\ge t_0>0$, then $\nu_b\ge t_0^2D_b-\Delta_b$.
\end{proposition}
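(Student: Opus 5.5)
Project the stationarity equation onto the orthogonal complement of $\lambda_b$. Since $\widetilde u_b$ is a unit vector, $\|P_b^\perp\widetilde u_b\|_2=\sin\angle(\widetilde u_b,\lambda_b)$. Applying $P_b^\perp$ to $g_b(\widetilde u)=\nu_b\widetilde u_b$ therefore gives
\[
\nu_b\sin\angle(\widetilde u_b,\lambda_b)=\|P_b^\perp g_b(\widetilde u)\|_2 .
\]
Now split $g_b(\widetilde u)=\sum_{c\ne b}w_{bc}(\Abar_{bc}+E_{bc})\widetilde u_c$ into its population and error parts. Under Assumption~\ref{ass:coherent}, $\Abar_{bc}\widetilde u_c=\theta_{bc}\lambda_b(\lambda_c^\top\widetilde u_c)$ lies in the span of $\lambda_b$ and is annihilated by $P_b^\perp$. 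Pairs with $w_{bc}=0$ contribute nothing, so the assumption is only needed on positive-weight pairs. Hence
\[
P_b^\perp g_b(\widetilde u)=\sum_{c\ne b}w_{bc}P_b^\perp E_{bc}\widetilde u_c .
\]
The triangle inequality and $\|\widetilde u_c\|_2=1$ then bound the norm by $\sum_{c\ne b}w_{bc}\|P_b^\perp E_{bc}\|_\op$. Dividing by $\nu_b>0$ gives the first inequality in \eqref{eq:first-order}. The second follows from $\|P_b^\perp\|_\op\le1$ and the definition of $\Delta_b$.

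For the multiplier lower bound, expand
\[
\nu_b=\widetilde u_b^\top g_b(\widetilde u)=\sum_{c\ne b}w_{bc}\widetilde u_b^\top\Abar_{bc}\widetilde u_c+\sum_{c\ne b}w_{bc}\widetilde u_b^\top E_{bc}\widetilde u_c .
\]
By the coherent model, the first sum equals $\sum_{c\ne b}w_{bc}\theta_{bc}t_bt_c=t_b\sum_{c\ne b}a_{bc}t_c$. Each term of the second sum is at least $-w_{bc}\|E_{bc}\|_\op$, since $\widetilde u_b$ and $\widetilde u_c$ are unit vectors. This gives $\nu_b\ge t_b\sum_{c\ne b}a_{bc}t_c-\Delta_b$.

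In the special case, suppose all $t_c$ share a common sign and $|t_c|\ge t_0$. Then every product $t_bt_c$ is nonnegative and at least $t_0^2$. Because the $a_{bc}$ are nonnegative, $t_b\sum_{c\ne b}a_{bc}t_c\ge t_0^2D_b$, which gives $\nu_b\ge t_0^2D_b-\Delta_b$.

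The one step needing care is the identity $\|P_b^\perp\widetilde u_b\|_2=\sin\angle(\widetilde u_b,\lambda_b)$. It holds because the angle is defined through $|\lambda_b^\top\widetilde u_b|=\cos\angle$, and $\|P_b^\perp\widetilde u_b\|_2^2=1-(\lambda_b^\top\widetilde u_b)^2$. It is also worth checking the sign conventions: the positivity $\nu_b>0$ is assumed rather than derived, so no orientation argument is needed for the first claim.
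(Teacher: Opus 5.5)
Your proof is correct and follows the paper's argument. You apply $P_b^\perp$ to the stationarity equation so the coherent population term drops out, bound the remaining error terms by the triangle inequality, and take the inner product with $\widetilde u_b$ to lower-bound the multiplier; your explicit checks of the common-sign case and of $\|P_b^\perp\widetilde u_b\|_2=\sin\angle(\widetilde u_b,\lambda_b)$ fill in steps the paper leaves implicit.
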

\begin{proof}
Substituting $\Ahat_{bc}=\theta_{bc}\lambda_b\lambda_c^\top+E_{bc}$ gives
$g_b(\widetilde u)=\lambda_b\sum_{c\ne b}a_{bc}t_c+\sum_{c\ne b}w_{bc}E_{bc}\widetilde u_c$.
Applying $P_b^\perp$ to the stationarity condition removes the first term, so
$\nu_bP_b^\perp\widetilde u_b=\sum_{c\ne b}w_{bc}P_b^\perp E_{bc}\widetilde u_c$. For a unit vector, $\|P_b^\perp\widetilde u_b\|_2=\sin\angle(\widetilde u_b,\lambda_b)$, and the triangle inequality gives the first bound in~\eqref{eq:first-order}; the second uses $\|P_b^\perp\|_\op\le1$. Taking the inner product of the stationarity condition with $\widetilde u_b$ gives $\nu_b=t_b\sum_{c\ne b}a_{bc}t_c+\widetilde u_b^\top\sum_{c\ne b}w_{bc}E_{bc}\widetilde u_c$, and the last term is at least $-\Delta_b$.
\end{proof}

Every fixed point of the cyclic update~\eqref{eq:cyclic} with nonzero block gradients satisfies the hypothesis with $\nu_b=\|g_b(\widetilde u)\|_2$, which the algorithm computes. The proposition requires neither global optimality nor an optimization gap: it holds at any stationary point whose multipliers are positive. For such a stationary point, if $R=4\Delta+2\eta<\mu$, Corollary~\ref{cor:common-sign} gives a common sign and Theorem~\ref{thm:multistability} gives $t_b^2\ge1-R/D_{\min}$ for every block, so that
\begin{equation}
\sin\angle(\widetilde u_b,\lambda_b)\le\frac{\sum_{c\ne b}w_{bc}\|P_b^\perp E_{bc}\|_\op}{(1-R/D_{\min})D_b-\Delta_b}
\label{eq:first-order-apriori}
\end{equation}
whenever the denominator is positive. This is a noise-to-degree bound of the same linear form as the two-block noise-to-gap bound of Theorem~\ref{thm:wedin}, with the weighted degree of the block in the role of the singular gap. The numerator involves only the rows of $E_{bc}$ orthogonal to $\lambda_b$: a perturbation aligned with the population direction changes the multiplier but not the direction.

\begin{remark}[Aligned and orthogonal error]
\label{rem:aligned-error}
Under Assumption~\ref{ass:coherent}, write $E_{bc}=\epsilon_{bc}\lambda_b\lambda_c^\top+E_{bc}^\perp$ with $\epsilon_{bc}=\lambda_b^\top E_{bc}\lambda_c$, so that $\Ahat_{bc}=(\theta_{bc}+\epsilon_{bc})\lambda_b\lambda_c^\top+E_{bc}^\perp$. If $\theta_{bc}+\epsilon_{bc}>0$ on every edge, the first term satisfies Assumption~\ref{ass:coherent} with the same directions and with edge weights $\widetilde a_{bc}=w_{bc}(\theta_{bc}+\epsilon_{bc})$, so Theorem~\ref{thm:multistability} and Corollary~\ref{cor:common-sign} hold with $\Delta^\perp=\sum_{b<c}w_{bc}\|E^\perp_{bc}\|_\op$ in place of $\Delta$ and the realized degrees and algebraic connectivity of $\widetilde a$ in place of those of $a$. Because $\|E_{bc}\|_\F^2=\epsilon_{bc}^2+\|E_{bc}^\perp\|_\F^2$,
\[
\E\|E^\perp_{bc}\|_\F^2=V_{bc}-\Var(\lambda_b^\top\Ahat_{bc}\lambda_c),
\qquad
\E\|P_b^\perp E_{bc}\|_\F^2=V_{bc}-\sum_{j=1}^{p_c}\Var(\lambda_b^\top\Ahat_{bc}e_j),
\]
where the subtracted terms are fixed-direction variances given by Proposition~\ref{prop:nonnull-variance}. The same second-moment argument bounds $\Delta^\perp$ and the numerator in Proposition~\ref{prop:first-order}. Recovery using the realized edge weights remains subject to their positivity.
\end{remark}

\section{Higher-order components and deflation}
\label{sec:deflation}
Sequential extraction seeks additional spatial programs after removing the directions already fitted. Its interpretation depends on the structure of the remaining population association, and its accuracy depends on how errors in earlier directions alter the residual matrices.

\subsection{Projected and constrained deflation}
Suppose components $1,\ldots,k-1$ have been obtained with orthonormal directions within each block. Let $U_b^{(k-1)}=[u_b^{(1)},\ldots,u_b^{(k-1)}]$ and $\Pi_b^{(k)}=I-U_b^{(k-1)}U_b^{(k-1)\top}$. Projected deflation replaces each pair matrix by
\begin{equation}
\Ahat_{bc}^{(k)}=\Pi_b^{(k)}\Ahat_{bc}\Pi_c^{(k)}.
\label{eq:deflated}
\end{equation}
This is projection deflation~\citep{Mackey2008}. \citet[p.~97]{HanafitenBerge2003} use the same coefficient-space projection for successive MAXBET components. For broader treatments of successive multiblock extraction, see \citet{tenBerge1988} and \citet{Tenenhaus2017}.
\begin{proposition}[Value equivalence of projected and constrained deflation]
\label{prop:deflation-value}
Assume each deflated subspace is nontrivial. The projected program and the program that explicitly constrains every $u_b$ to the deflated subspace have the same optimal value. Every projected-program maximizer with nonzero projected block gradients lies in the deflated subspaces. The two programs need not have identical stationary-point sets when a projected block gradient is zero.
\end{proposition}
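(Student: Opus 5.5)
Write $\Pi_b=\Pi_b^{(k)}$ and let $\mathcal S_b=\operatorname{range}(\Pi_b)$, assumed nontrivial. Let $F^{(k)}(u)=\sum_{b<c}w_{bc}u_b^\top\Pi_b\Ahat_{bc}\Pi_cu_c$ be the projected objective over the full product of spheres, and let the constrained program maximize $\Fhat$ over unit $u_b\in\mathcal S_b$. The whole argument rests on the identity $F^{(k)}(u)=\Fhat(\Pi u)$, where $\Pi u=(\Pi_1u_1,\ldots,\Pi_Bu_B)$, which holds because each $\Pi_b$ is symmetric and idempotent.

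For value equivalence, I would prove two inequalities.
\begin{enumerate}
\item Projected value $\ge$ constrained value: on the constrained feasible set $\Pi_bu_b=u_b$, so the two objectives coincide there. The constrained set is a subset of the projected feasible set.
\item Constrained value $\ge$ projected value: take any feasible $u$ and set $r_b=\|\Pi_bu_b\|\in[0,1]$. The objective $F^{(k)}(u)=\Fhat(\Pi u)$ is multilinear in the blocks. Replace each $\Pi_bu_b$ by $\Pi_bu_b/r_b$ when $r_b>0$, and by an arbitrary unit vector of $\mathcal S_b$ when $r_b=0$.
\end{enumerate}
The second step is the main obstacle, because the objective is not monotone in the scalings $r_b$ when some pairwise terms are negative. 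I would handle it with block-coordinate maximization. Fix all blocks but $b$; the objective is then linear in $\Pi_bu_b$, namely $(\Pi_bu_b)^\top\Pi_bg_b+C$, where $g_b$ is evaluated at the projected vectors. Over $\|\Pi_bu_b\|\le1$ inside $\mathcal S_b$, its maximum $\|\Pi_bg_b\|+C$ is attained by a unit vector in $\mathcal S_b$. If $\Pi_bg_b=0$, any unit vector in $\mathcal S_b$ attains it. So replacing block $b$ by this vector does not decrease the objective. Doing this for blocks $1,\ldots,B$ in turn produces a constrained-feasible point whose value is at least $F^{(k)}(u)$. Taking the supremum over $u$ gives equality of the optimal values. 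Compactness guarantees that both maxima are attained.

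For the second claim, let $u^\star$ maximize the projected program, and let block $b$ have nonzero projected gradient $h_b=\Pi_b\sum_{c\ne b}w_{bc}\Ahat_{bc}\Pi_cu_c^\star$. With the other blocks fixed, the objective in $u_b$ is $u_b^\top h_b+C$. Its maximum over the unit sphere is attained uniquely at $u_b=h_b/\|h_b\|$, which lies in $\mathcal S_b$. Global maximality forces $u_b^\star$ to equal this maximizer, so $u_b^\star\in\mathcal S_b$. Applying this to every block proves the claim.

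For the final claim, I would exhibit a small counterexample. Suppose $h_b=0$ at some point. Then any unit $u_b$, including one with a component in $\operatorname{range}(U_b^{(k-1)})$, satisfies projected stationarity, since $h_b$ is parallel to $u_b$ trivially. That point is not feasible for the constrained program. Concretely, take $B=2$ and $\Ahat=0$: every point is projected-stationary, but only points in $\mathcal S_1\times\mathcal S_2$ are constrained-feasible. This shows the two stationary sets can differ.
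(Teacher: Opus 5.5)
Your proposal is correct and follows essentially the same route as the paper: the objective depends only on the projections $\Pi_b u_b$, each block is pushed to a unit vector of the deflated subspace by exact blockwise linear maximization, and a nonzero projected gradient forces the unique block maximizer into the projected range. Your sequential block-replacement argument makes explicit the paper's brief ``maximizer on the boundary of every block'' step. Your $\Ahat=0$ example (with at least one deflated direction, so that each $\mathcal S_b$ is a proper subspace) supplies the concrete instance of differing stationary sets that the paper leaves implicit.
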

\begin{proof}
Project each feasible vector into its deflated subspace. The objective depends only on those projections. Maximizing the resulting multilinear function over products of closed unit balls admits a maximizer on the boundary of every nontrivial block: if a block gradient is nonzero, align the block with it; if the gradient is zero, any unit vector in that subspace preserves the objective. This gives equality of optimal values. At a projected maximizer with nonzero block gradient, the exact block maximizer lies in the projected range.
\end{proof}

\subsection{Population recovery under ordered factors}
Recovering named factors in a sequence requires a population ordering that is shared across type pairs. A model with common factor strengths and pair-specific overall weights makes this ordering explicit and reveals the separation needed to recover each successive component.

\begin{assumption}[Ordered separable latent factors]
\label{ass:ordered-factors}
For $j=1,\ldots,r$, let $\lambda_{b,j}$ be orthonormal within each block. Suppose
\begin{equation}
\Abar_{bc}=\omega_{bc}\sum_{j=1}^r\theta_j\lambda_{b,j}\lambda_{c,j}^\top,
\qquad \theta_1>\theta_2>\cdots>\theta_r>0,
\label{eq:separablefactors}
\end{equation}
where $\omega_{bc}\ge0$ and the graph with edges $w_{bc}\omega_{bc}>0$ is connected. Define $d_b=\sum_{c\ne b}w_{bc}\omega_{bc}$ and $\theta_{r+1}=0$.
\end{assumption}

\begin{theorem}[Population recovery by sequential deflation]
\label{thm:ordered-deflation}
Under Assumption~\ref{ass:ordered-factors}, suppose the true factors $\lambda_{b,1},\ldots,\lambda_{b,k-1}$ have been deflated. Then the order-$k$ population problem is maximized exactly by the common positive or common negative $k$th factor. Moreover,
\begin{equation}
F_k^{0\star}-F_k^0(u)
\ge\frac12(\theta_k-\theta_{k+1})
\sum_{b=1}^Bd_b\sin^2\angle(u_b,\lambda_{b,k}).
\label{eq:deflation-gap}
\end{equation}
\end{theorem}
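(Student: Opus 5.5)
The plan is to reduce the deflated population objective to a weighted sum of scalar synchronization problems, one per remaining factor, and then reuse the edgewise identity behind \eqref{eq:loss}. Since the true factors $\lambda_{b,1},\ldots,\lambda_{b,k-1}$ have been deflated, $\Pi_b^{(k)}$ annihilates exactly those directions. By orthonormality within each block, it fixes $\lambda_{b,j}$ for $j\ge k$. Hence
\[
\Pi_b^{(k)}\Abar_{bc}\Pi_c^{(k)}=\omega_{bc}\sum_{j=k}^r\theta_j\lambda_{b,j}\lambda_{c,j}^\top .
\]
Writing $a_{bc}=w_{bc}\omega_{bc}$ and $t_{b,j}=\lambda_{b,j}^\top u_b$, the order-$k$ population objective becomes
\[
F_k^0(u)=\sum_{j=k}^r\theta_j\sum_{b<c}a_{bc}t_{b,j}t_{c,j}.
\]
This expression is valid whether or not $u_b$ is constrained to the deflated subspace. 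By Proposition~\ref{prop:deflation-value}, the projected and constrained programs have the same value, so it suffices to work with unit $u_b$ and the coordinates $t_{b,j}$. Orthonormality gives the Bessel constraint $\sum_{j\ge k}t_{b,j}^2\le1$.

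The key algebraic step applies the identity $2t_bt_c=t_b^2+t_c^2-(t_b-t_c)^2$ edge by edge, for each $j$. This yields
\[
F_k^0(u)=\tfrac12\sum_{j\ge k}\theta_j\Bigl\{\sum_bd_bt_{b,j}^2-\sum_{b<c}a_{bc}(t_{b,j}-t_{c,j})^2\Bigr\}.
\]
Evaluating at $u_b=\pm\lambda_{b,k}$ with a common sign gives the candidate optimum $\tfrac12\theta_k\sum_bd_b$. For the upper bound I would use $\theta_j\le\theta_{k+1}$ for $j>k$ (including the convention $\theta_{r+1}=0$ when $k=r$) together with $\sum_{j>k}t_{b,j}^2\le1-t_{b,k}^2$. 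These give
\[
\sum_{j\ge k}\theta_jt_{b,j}^2\le\theta_k-(\theta_k-\theta_{k+1})(1-t_{b,k}^2).
\]
Discarding the nonnegative Laplacian terms then gives $F_k^0(u)\le\tfrac12\theta_k\sum_bd_b$, so the candidate value is the maximum. The difference is at least $\tfrac12(\theta_k-\theta_{k+1})\sum_bd_b(1-t_{b,k}^2)$, and for unit $u_b$ we have $1-t_{b,k}^2=\sin^2\angle(u_b,\lambda_{b,k})$. This proves \eqref{eq:deflation-gap}.

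For exact identification, a maximizer must make every dropped term vanish. Connectivity with $B\ge2$ gives $d_b>0$, and $\theta_k>\theta_{k+1}$ is strict, so $t_{b,k}^2=1$ for every block, i.e.\ $u_b=\pm\lambda_{b,k}$. The remaining Laplacian term $\theta_k\sum_{b<c}a_{bc}(t_{b,k}-t_{c,k})^2$ must also vanish. This forces equal signs across every edge, and hence, by connectivity, one common sign, exactly as in the proof of \eqref{eq:loss}.

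The main point requiring care is the spectral-mixing inequality that bounds the weighted Bessel sum by $\theta_k$ minus a gap term. It needs both the strict ordering and the orthonormality of the factors within each block. It also requires checking that components of $u_b$ outside $\operatorname{span}\{\lambda_{b,j}\}_{j\ge k}$ contribute nothing to the objective while still counting toward the sine loss. That is consistent, since such components only decrease $t_{b,k}^2$.
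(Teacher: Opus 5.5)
Your proposal is correct and follows the paper's argument. Like the paper, you bound the deflated objective edgewise via $2xy\le x^2+y^2$, use the Bessel constraint $\sum_{j\ge k}t_{b,j}^2\le1$ together with $\theta_j\le\theta_{k+1}$ for $j>k$, and invoke connectivity to synchronize signs at equality. Your exact identity simply keeps the Laplacian remainder that the paper's inequality drops, which makes the equality case, and the roles of $d_b>0$ and $\theta_k>\theta_{k+1}$, more explicit than in the paper's terse proof.
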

\begin{proof}
After earlier factors are removed, write $t_{b,j}=\lambda_{b,j}^\top u_b$. Orthonormality gives $\sum_{j\ge k}t_{b,j}^2\le1$. Applying $2xy\le x^2+y^2$ edgewise and using $\theta_j\le\theta_{k+1}$ for $j>k$ yields \eqref{eq:deflation-gap}. Equality is attained at the $k$th factor, and graph connectedness synchronizes the signs.
\end{proof}

\subsection{Error propagation with estimated directions}
\begin{remark}[Deflation with estimated directions]
In a fitted analysis, the directions removed at each stage are themselves estimates, so their error propagates to the residual matrices of later stages. Let $\widehat\Pi_b^{(k)}$ and $\Pi_b^{(k)}$ be estimated and population deflation projectors. For any pair matrix $A$,
\begin{equation}
\|\widehat\Pi_b^{(k)}A\widehat\Pi_c^{(k)}-\Pi_b^{(k)}A\Pi_c^{(k)}\|_\op
\le
\{\|\widehat\Pi_b^{(k)}-\Pi_b^{(k)}\|_\op+
\|\widehat\Pi_c^{(k)}-\Pi_c^{(k)}\|_\op\}\|A\|_\op.
\label{eq:deflprop}
\end{equation}
For projectors of equal rank, the projector difference equals the sine of the largest principal angle between the corresponding subspaces. Earlier component error therefore creates an additional pairwise matrix perturbation at order $k$. Because the residual population matrices generally contain several factors, the rank-one bound of Theorem~\ref{thm:multistability} is replaced below by the ordered-factor gap of Theorem~\ref{thm:ordered-deflation}.
\end{remark}
\begin{proof}[Verification of \eqref{eq:deflprop}]
Add and subtract $\widehat\Pi_b^{(k)}A\Pi_c^{(k)}$ and use the triangle inequality, submultiplicativity, and $\|\Pi\|_\op\le1$ for orthogonal projectors.
\end{proof}

\begin{corollary}[Recovery with estimated deflation]
\label{cor:estimated-deflation}
Under Assumption~\ref{ass:ordered-factors}, let $1\le k\le r$. Let $\Pi_b^{(k)}$ remove the true first $k-1$ factors, and let $\widehat\Pi_b^{(k)}$ be any orthogonal projector of the same rank. Define
\begin{align}
A^0_{bc,k}&=\Pi_b^{(k)}\Abar_{bc}\Pi_c^{(k)},&
\widehat A_{bc,k}&=\widehat\Pi_b^{(k)}\Ahat_{bc}\widehat\Pi_c^{(k)},\notag\\
\Delta_k&=\sum_{b<c}w_{bc}\|\widehat A_{bc,k}-A^0_{bc,k}\|_\op,&
e_{b,k}&=\|\widehat\Pi_b^{(k)}-\Pi_b^{(k)}\|_\op.
\label{eq:deflation-error-total}
\end{align}
Let $\widehat F_k$ be the bilinear objective formed from $\widehat A_{bc,k}$ over the product of unit spheres. If
$\widehat F_k^\star-\widehat F_k(\widetilde u^{(k)})\le\eta_k$, then
\begin{equation}
\sum_b d_b\sin^2\angle(\widetilde u_b^{(k)},\lambda_{b,k})
\le\frac{4\Delta_k+2\eta_k}{\theta_k-\theta_{k+1}},
\label{eq:estimated-deflation-recovery}
\end{equation}
where
\begin{equation}
\Delta_k\le\sum_{b<c}w_{bc}
\left\{\|E_{bc}\|_\op+(e_{b,k}+e_{c,k})\|\Abar_{bc}\|_\op\right\}.
\label{eq:estimated-deflation-input}
\end{equation}
The conclusion also applies to an approximate optimizer constrained to the estimated deflated subspaces, with its gap measured against the constrained optimum. It concerns unoriented block directions. Applying Proposition~\ref{prop:upperbounds} to the deflated sample matrices $\widehat A_{bc,k}$ supplies a computable upper bound $\overline\eta_k$ for $\eta_k$.
\end{corollary}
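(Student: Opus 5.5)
The proof combines three ingredients: the population quadratic-growth bound of Theorem~\ref{thm:ordered-deflation}, the uniform-deviation argument from the proof of Theorem~\ref{thm:multistability}, and the projector perturbation inequality~\eqref{eq:deflprop}.

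\emph{Step 1: uniform closeness of objectives.} Let $F_k^0(u)=\sum_{b<c}w_{bc}u_b^\top A^0_{bc,k}u_c$ over the product of unit spheres. For every feasible $u$,
\[
|\widehat F_k(u)-F_k^0(u)|\le\sum_{b<c}w_{bc}\|\widehat A_{bc,k}-A^0_{bc,k}\|_\op=\Delta_k .
\]
Let $u^0$ be a population maximizer. Then
\[
F_k^{0\star}-F_k^0(\widetilde u^{(k)})\le \widehat F_k(u^0)-\widehat F_k(\widetilde u^{(k)})+2\Delta_k\le \eta_k+2\Delta_k,
\]
since $\widehat F_k(u^0)\le\widehat F_k^\star$.

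\emph{Step 2: invoke the population gap.} The objective $F_k^0$ is exactly the order-$k$ population problem after the true first $k-1$ factors are deflated. I will read Theorem~\ref{thm:ordered-deflation} as a statement about the projected program over full unit spheres. The bound~\eqref{eq:deflation-gap} holds there because the projected objective depends on $u_b$ only through $t_{b,j}$ for $j\ge k$, and $\sum_{j\ge k}t_{b,j}^2\le1$ still holds. By Proposition~\ref{prop:deflation-value}, the optimal value is the same as in the constrained version.

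Combining with Step 1 gives
\[
\tfrac12(\theta_k-\theta_{k+1})\sum_b d_b\sin^2\angle(\widetilde u^{(k)}_b,\lambda_{b,k})\le 2\Delta_k+\eta_k .
\]
Multiplying by $2/(\theta_k-\theta_{k+1})$ yields~\eqref{eq:estimated-deflation-recovery}.

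\emph{Step 3: bound $\Delta_k$.} Write
\[
\widehat A_{bc,k}-A^0_{bc,k}=\widehat\Pi_b^{(k)}E_{bc}\widehat\Pi_c^{(k)}+\bigl(\widehat\Pi_b^{(k)}\Abar_{bc}\widehat\Pi_c^{(k)}-\Pi_b^{(k)}\Abar_{bc}\Pi_c^{(k)}\bigr).
\]
The first term has operator norm at most $\|E_{bc}\|_\op$ because projectors are contractions. The second term is bounded by~\eqref{eq:deflprop} with $A=\Abar_{bc}$. Summing with weights $w_{bc}$ gives~\eqref{eq:estimated-deflation-input}.

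\emph{Remaining claims.} For an optimizer constrained to the estimated deflated subspaces, the first claim of Proposition~\ref{prop:deflation-value} (applied to $\widehat A_{bc,k}$) shows that the constrained optimum equals $\widehat F_k^\star$. A constrained point is feasible for the unconstrained program, so its gap relative to the constrained optimum equals its gap relative to $\widehat F_k^\star$. Hence the same argument applies. The squared-sine loss is invariant under per-block sign flips, so the statement concerns unoriented directions. Finally, Proposition~\ref{prop:upperbounds} applied to $\widehat A_{bc,k}$ gives the computable $\overline\eta_k$.

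\emph{Main obstacle.} The delicate step is Step 2: confirming that the population gap bound holds over full spheres for the projected objective, not just on the deflated subspace. I expect this to follow because the edgewise $2xy\le x^2+y^2$ argument uses only $\sum_{j\ge k}t_{b,j}^2\le1$. I also need to check that the maximal value is $\tfrac12\theta_k\sum_b d_b$, which equals $\theta_k\sum_{b<c}w_{bc}\omega_{bc}$.
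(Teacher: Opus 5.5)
Your proposal is correct and follows the paper's proof essentially step for step. The paper likewise bounds $\Delta_k$ via projector contraction and \eqref{eq:deflprop}, and uses uniform closeness of the objectives to get $F_k^{0\star}-F_k^0(\widetilde u^{(k)})\le 2\Delta_k+\eta_k$. It then notes that the growth inequality \eqref{eq:deflation-gap} holds on the full product of spheres, because removed directions contribute nothing and the remaining squared factor coordinates sum to at most one, and it invokes Proposition~\ref{prop:deflation-value} for the constrained version, so the step you flagged as the main obstacle is resolved in exactly the way you anticipated.
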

\begin{proof}
Expand the residual error as
\[
\widehat A_{bc,k}-A^0_{bc,k}
=\widehat\Pi_b^{(k)}E_{bc}\widehat\Pi_c^{(k)}
+\widehat\Pi_b^{(k)}\Abar_{bc}\widehat\Pi_c^{(k)}
-\Pi_b^{(k)}\Abar_{bc}\Pi_c^{(k)}.
\]
Projector contraction and~\eqref{eq:deflprop} give~\eqref{eq:estimated-deflation-input}. Uniformly over the common product of spheres, $|\widehat F_k-F_k^0|\le\Delta_k$, hence
$F_k^{0\star}-F_k^0(\widetilde u^{(k)})\le2\Delta_k+\eta_k$.
The growth inequality~\eqref{eq:deflation-gap} holds on this full product as well: components in the removed directions contribute nothing, and the remaining squared factor coordinates still sum to at most one. Combining the inequalities proves~\eqref{eq:estimated-deflation-recovery}. Proposition~\ref{prop:deflation-value} identifies the projected and constrained sample optimal values, proving the last assertion. No independence between estimated projectors and matrix errors is needed for this deterministic result.
\end{proof}
Error propagation through inexact Hotelling deflation has been analyzed for symmetric positive semidefinite matrices~\citep{Liao2024}; rectangular pair matrices and multiblock objectives lie outside that analysis.

\begin{remark}[High-probability form]
\label{rem:deflation-hp}
On the event $\{\Delta\le r_\alpha\}$ of Corollary~\ref{cor:hp-recovery}, inequality~\eqref{eq:estimated-deflation-input} gives
\begin{equation}
\Delta_k\le r_\alpha+\sum_{b<c}w_{bc}(e_{b,k}+e_{c,k})\|\Abar_{bc}\|_\op
\label{eq:deflation-hp}
\end{equation}
simultaneously for every stage $k$, because the event concerns the original pair matrices only.
\end{remark}

\begin{remark}[An explicit conservative recursion]
\label{rem:deflation-recursion}
The following recursion tracks the projector-error bound across successive stages. If the previously estimated axes are orthonormal within each block, then
\begin{equation}
e_{b,k}^2\le\sum_{j<k}\sin^2\angle(\widetilde u_b^{(j)},\lambda_{b,j}),
\qquad e_{b,k}\le1.
\label{eq:projector-rss}
\end{equation}
Indeed, let $U_b$ and $\widehat U_b$ contain the true and estimated first $k-1$ directions, each with orthonormal columns. The squared operator norm of the projector difference is the largest squared principal-angle sine between the two subspaces, which is at most the sum of all squared sines, $k-1-\|U_b^\top\widehat U_b\|_\F^2$; keeping only the diagonal entries of $U_b^\top\widehat U_b$ in the Frobenius norm gives~\eqref{eq:projector-rss}. The root-sum-square form improves the linear bound $\sum_{j<k}\sin\angle(\widetilde u_b^{(j)},\lambda_{b,j})$ obtained from the triangle inequality on rank-one projectors, and the cap by one holds for any two orthogonal projectors of equal rank. To close the recursion, suppose $\|\Abar_{bc}\|_\op\le M_{bc}$ and work on the event $\{\Delta\le r_\alpha\}$ of Remark~\ref{rem:deflation-hp}; any deterministic bound on $\Delta$ may replace $r_\alpha$. Starting with $\overline e_{b,1}=0$, set
\begin{align}
\overline\Delta_k&=r_\alpha+\sum_{b<c}w_{bc}(\overline e_{b,k}+\overline e_{c,k})M_{bc},\notag\\
q_{b,k}&=\min\left\{1,\sqrt{\frac{4\overline\Delta_k+2\overline\eta_k}
{d_b(\theta_k-\theta_{k+1})}}\right\},\notag\\
\overline e_{b,k+1}&=\min\Bigl\{1,\Bigl(\sum_{j\le k}q_{b,j}^2\Bigr)^{1/2}\Bigr\}.
\label{eq:deflation-recursion}
\end{align}
Induction using Corollary~\ref{cor:estimated-deflation} and~\eqref{eq:projector-rss} gives $\sin\angle(\widetilde u_b^{(k)},\lambda_{b,k})\le q_{b,k}$ and $e_{b,k}\le\overline e_{b,k}$ for every stage on that event. For pairs with $w_{bc}>0$, on the same event one may take $M_{bc}=\|\Ahat_{bc}\|_\op+r_\alpha/w_{bc}$. The recursion saturates at $q_{b,k}=1$ once $4\overline\Delta_k+2\overline\eta_k\ge d_b(\theta_k-\theta_{k+1})$, so the bound for that block and stage is uninformative.
\end{remark}

\subsection{Limitations and score interpretation}
Without a shared ordering, the first component can mix factors.

\begin{example}[A nonseparable population produces a mixed component]
Let $B=3$ and consider two orthogonal factors. Factor 1 couples only blocks $(1,2)$ with strength $1$, and factor 2 couples only blocks $(2,3)$ with strength $1-\varepsilon$, where $0<\varepsilon<1$. Set
\[
u_1=\lambda_{1,1},\qquad u_3=\lambda_{3,2},\qquad
u_2=\alpha\lambda_{2,1}+\beta\lambda_{2,2},\quad \alpha^2+\beta^2=1.
\]
The objective is $\alpha+(1-\varepsilon)\beta$, whose maximum is $\sqrt{1+(1-\varepsilon)^2}>1$. This is the global maximum: for fixed $u_2$, the optimal $u_1,u_3$ align in sign with the two relevant projections, reducing the problem to maximizing $|\alpha|+(1-\varepsilon)|\beta|$ over $\alpha^2+\beta^2\le1$; Cauchy--Schwarz gives exactly $\sqrt{1+(1-\varepsilon)^2}$. Hence the first population component is mixed in block 2, and subsequent deflation depends on that mixed direction.
\end{example}

\begin{remark}[Coefficient orthogonality and score covariance]
Within-block orthogonality prevents reuse of previously selected coefficient directions. For centered $X_b$, however, the empirical covariance of two scores is
\[
\frac{1}{n_b}(X_bu_b^{(j)})^\top(X_bu_b^{(k)})
=u_b^{(j)\top}S_bu_b^{(k)},\qquad S_b=X_b^\top X_b/n_b.
\]
Euclidean coefficient orthogonality makes this zero when $S_b$ is proportional to the identity, as after whitening, for any number of blocks. Without whitening, the required condition is orthogonality in the metric $S_b$. A separate property holds for two blocks: successive singular pairs satisfy $u_j^\top\Ahat v_k=0$ for $j\ne k$, so distinct axes have zero kernel-weighted cross-association between the blocks. With three or more blocks, a joint fit need not diagonalize every pair matrix, even when each block is whitened.
\end{remark}

\section{Practitioner diagnostics and failure modes}
\label{sec:diagnostics}
When fitted programs vary across analyses, the useful next step depends on the source of the variation: sampling error, weak spectral separation, numerical suboptimality, and preprocessing choices call for different checks. Table~\ref{tab:diagnostics} relates each practical question to the theoretical quantity that addresses it.

\begin{table}[!htb]
\caption{Practitioner diagnostics.}
\label{tab:diagnostics}
\small
\begin{tabularx}{\textwidth}{>{\raggedright\arraybackslash}p{0.22\textwidth} >{\raggedright\arraybackslash}p{0.37\textwidth} X}
\toprule
Question & Suggested diagnostic & Interpretation\\
\midrule
Is a fixed pairwise association precise? & For fixed directions and justified score covariances, use Theorem~\ref{thm:variance}. Check reproducibility on held-out specimens; distinguish technical sections from independent biological replicates. & Small $\neff$ means high null variance; low stable rank warns against a Gaussian approximation. Plug-in calibration requires additional assumptions.\\
Is a two-block axis stable? & Report $s_A$ and a justified error radius as in Section~\ref{sec:matrix-noise}; compare the radius with the observed gap. & Apply~\eqref{eq:samplegap} with that radius.\\
Is a multiblock fit numerically reproducible? & Inspect monotone objective traces, multiple starts, stationarity residuals, and the upper gap in \eqref{eq:etaub}. & Agreement across starts supports numerical reproducibility; the upper gap bounds suboptimality.\\
Is a multiblock axis recoverable, and is its orientation consistent across types? & Report $\overline\eta$ and fitted multipliers $\nu_b$. If justified second moments and population-model quantities are available, assess Corollaries~\ref{cor:hp-recovery}--\ref{cor:common-sign}; use~\eqref{eq:first-order} at stationary points with positive multipliers. & Low population degree weakens direction recovery; weak graph connections weaken relative orientation. The degrees and graph connectivity are population quantities. Plug-in diagnostics do not inherit the stated probability guarantees without further justification.\\
How does imbalance enter? & Separate pair scaling, pairwise variance, and spatial coverage. & Uniform duplication is a scaling issue; rare or redundant blocks can still be less precise; selective coverage changes the target.\\
How sensitive is the fit to PCA or standardization? & Refit over plausible PCA ranks; for standardized criteria inspect conditioning and, when relevant, the terms in \eqref{eq:whitening-simple}. & Basis rotations within a fixed subspace are irrelevant; rank and whitening estimation can matter.\\
How sensitive is the fit to the kernel? & Compare matrices and fitted subspaces over scientifically prespecified kernels. & Kernel changes can perturb estimation and also change the population spatial scale being summarized.\\
Are higher components separately interpretable? & Inspect adjacent gaps, specimen-level reproducibility, and accumulated projector error in~\eqref{eq:estimated-deflation-input}, tracked by the recursion~\eqref{eq:deflation-recursion}; compare estimated-projector fits with refits that hold earlier axes fixed. & Uncorrelated within-type scores require covariance-metric orthogonality. Small factor gaps weaken guarantees for individual axes.\\
\bottomrule
\end{tabularx}
\end{table}

\section{Discussion}
\label{sec:discussion}
The covariance form of skrCCA targets spatially weighted cross-type association at the observed sites. Kernel choice changes the spatial scale of that target, while feature reduction changes the retained representation. Stability requires more than a large fitted association: two-block recovery depends on a justified matrix-error bound relative to spectral separation, and multiblock recovery also depends on numerical suboptimality and population structure. The coherent rank-one and ordered-factor models separate four failure modes: low weighted degree, weak graph cuts, nearly tied factors, and nonseparable pair structure; for higher components, Corollary~\ref{cor:estimated-deflation} adds earlier-axis error. Fixed positive-definite marginal metrics are absorbed by whitening, whereas kernel-edge standardization defines a different correlation estimand; when whitening metrics are estimated, their error adds a matrix perturbation whose effect depends on marginal conditioning.

The results here concern a bilinear objective with fixed pair matrices, including formulations with one positive-definite metric per block. General sums of slide-specific, direction-dependent ratios require separate optimization and perturbation analyses.

Using the oracle bounds as calibrated diagnostics requires control of covariance-estimation error. Randomization calibration of the fitted procedure requires specifying the transformation group, its invariance under the null, and the fitting and selection steps repeated under each transformation. The fourth-corner literature documents, for a different null and permutation scheme, how an unmatched scheme fails to control the type I error~\citep{DrayLegendre2008,terBraak2012}. The present results do not establish the validity of spatial-shift calibration, which translates the positions of one type, for fitted skrCCA axes.

\section*{Use of language models}
The authors used Claude Code (Fable 5.1) and Codex (GPT-6 Astra) to assist with selected mathematical derivations and proof development. The authors independently checked all arguments developed with this assistance and take full responsibility for the mathematical results and all other content of this manuscript.

\end{document}